\newcommand*\diff{\mathop{}\!\mathrm{d}}
\def\D{\mathcal{D}}
\def\U{\mathcal{U}}
\def\Re{\mathbf{R}}
\def\ep{\varepsilon}
\def\ta{\theta}
\def\la{\lambda}
\def\da{\delta}
\def\phi{\varphi}
\def\ul{\underline}
\newcommand{\df}[1]{\textit{\textbf{#1}}}
\newcommand{\abs}[1]{ \left | #1 \right | }
\def\exb{\mathbf{x}}
\title[Dynamically inconsistent]{Revealed preferences for dynamically inconsistent models}
\author[Echenique and Tserenjigmid]{Federico Echenique and Gerelt Tserenjigmid}
\thanks{Echenique: Department of Economics, UC Berkeley, 
\texttt{fede@econ.berkeley.edu} \\ Tserenjigmid: Department of Economics, UC Santa Cruz,  \texttt{gtserenj@ucsc.edu}. We are grateful to Laura Blow for comments on an earlier draft.}
\newtheorem{theorem}{Theorem}
\newtheorem{thm}[theorem]{Theorem}
\newtheorem{proposition}[theorem]{Proposition}
\newtheorem{lem}[theorem]{Lemma}
\theoremstyle{definition}
\newtheorem{defn}{Definition}
\theoremstyle{remark}
\newtheorem{remark}{Remark}
\theoremstyle{example}
\newtheoremstyle{named}{}{}{\itshape}{}{\bfseries}{:}{.5em}{#1\thmnote{#3}}
\theoremstyle{named}
\newtheorem*{namedaxiom}{}
\begin{document}

\maketitle

\begin{abstract}
    We study the testable implications of models of dynamically inconsistent
    choices when planned choices are unobservable, and thus only ``on path'' data is available. First, we discuss the approach in 
    \cite*{blowbrowningcrawford2021}, who characterize first-order
    rationalizability of the model of quasi-hyperbolic discounting. We show that the first-order approach does not guarantee rationalizability by means of the quasi-hyperbolic model. This motivates consideration of an abstract model of intertemporal choice, under which we provide a characterization of different behavioral models -- including the naive and sophisticated paradigms of dynamically inconsistent choice.
\end{abstract}

\section{Introduction}
A dynamically inconsistent decision-maker will make plans that she later does not carry out. If her plan is not carried out, it cannot be observed from the decision-maker's actions. This non-observability of inconsistent plans presents a unique challenge when we want to understand the testable implications of dynamically inconsistent models, because it means that the most obvious implication of dynamic inconsistency (its defining characteristic) is inherently unobservable. Of course, in experimental settings, one may imagine eliciting the decision-maker's plans, but such elicitation is in general very challenging, and certainly not possible with the non-experimental, observational, data that is often used in empirical work. Our purpose in the present paper is to discuss, and ultimately characterize, the testable implications of dynamically inconsistent model when plans are unobservable.

Imagine a decision-maker, Alice, who has to make choices in periods one and two: this week and next. She may have certain preferences over the sequence of choices, perhaps how much to consume in each week; but her preferences in the second week may be governed by different preferences than she holds initially over sequences of choices throughout the two-week span. Her week-two self may then deviate from her planned choices in week one. Such behavior is called dynamically inconsistent: the notion of dynamic inconsistency requires comparing planned to actual behavior. The problem, however, is that if we only have access to choice data, then Alice's plan is unobservable because her planned choices are not carried out. How can we then determine that Alice's week-two self deviated from her initial week-one plan?

The literature often assumes that Alice's two selves are engaged in a non-cooperative game, and that a game-theoretic equilibrium outcome is determined by backward induction. Alice's week-two preferences determine an optimal choice, for each choice of her week-one self. So in week one she chooses an optimal action, understanding that in week two she will choose optimally given her second-week preferences. The equilibrium outcome is what we call \df{on-path data}. It contrasts with the possibility of observing a plan for Alice's first-week preferences, or observing a full contingent strategy for her second-week self. We focus on settings where such planned ``counterfactual'' evidence is unavailable, and assume that one can only observe on-path behavior.  

Our paper proceeds in two parts. First we assess the main existing effort to  characterize the testable implications of dynamically inconsistent behavior using on-path data. We find that this problem is very challenging, and discuss the proposed solutions. In second place, we consider general preferences in an abstract model of choice, with multiple on-path observations. Here we are able to characterize the testable implications of equilibrium behavior. The combination of the lack of structure, and multiple observations, renders the problem tractable.

Our starting point is \cite{blowbrowningcrawford2021}, the main existing effort to study dynamic inconsistency and revealed preferences, who present a revealed-preference characterization of quasi-hyperbolic discounting preferences in a demand-theory setting.  A consumer chooses over time, according to intertemporal utility tradeoffs that change over time because of the quasi-hyperbolic assumption. Following \cite{afriat1967construction} (in the general utility-maximization framework), and \cite{browning1989nonparametric} (for dynamically consistent intertemporal choice), the authors use a \df{first-order approach}. That is, \cite{blowbrowningcrawford2021} interpret the consistency of data with the model as \emph{the existence of a solution to a system of equations that captures the first-order conditions for utility maximization}. In the case of dynamically inconsistent quasi-hyperbolic discounting agents, these first-order conditions come from Euler equations derived in, for example, \cite{harris2001dynamic}.

Now, in the case of Afriat and Browning, one can show that the first-order approach is equivalent to saying that there is an instance of the model that explains the data. Indeed, in the general model of utility maximization, and in the model of a dynamically consistent consumer with exponential discounting, a dataset is consistent with the first-order approach if and only if one can find a utility function that satisfies the conditions laid out in the model, and that generates the data as optimal choices.  Actually, proving this fact is, arguably, Afriat's main contribution. 

We show that the no-loss-of-generality of the first-order approach does not hold for the model of a dynamically inconsistent quasi-hyperbolic agent. In other words, the first-order approach is too permissive. To explain, let us say that a dataset is \df{FOCs rationalizable} if there are model parameters such that the system of inequalities in \cite{blowbrowningcrawford2021} is satisfied. We also say that a dataset is \df{equilibrium rationalizable} if there are model parameters such that observed consumption is an equilibrium outcome of the quasi-hyperbolic discounting model. Our Theorem~\ref{thm:example} shows that there are datasets that are FOCs rationalizable, but not equilibrium rationalizable. 

The conclusion in our Theorem~\ref{thm:example} matters, beyond its theoretical implications, because Blow et al.\ carry out an empirical application in which they emphasize the added empirical explanatory power of the quasi-hyperbolic model. The gap between \emph{equilibrium rationalizability} and \emph{FOCs rationalizability} may call into question the conclusion about the explanatory power of the quasi-hyperbolic model. 

We should emphasize that the problem is difficult, and \cite{blowbrowningcrawford2021} make clear progress. Our Theorem~\ref{thm:example} points out issues with the first-order approach, but we are unable to provide a characterization of the testable implications in the one-observation consumption setting of Blow et.\ al. The first-order approach is tractable, even if subject to the critiques of our paper.

In search for a tractable formulation, we turn to a general and abstract model of choice with dynamic inconsistency. We consider a dataset of choices from multiple budgets, and assume that only on-path choices are observed. Our model allows for very general preferences, nesting the quasi-hyperbolic model, as well as many other models of intertemporal choice. The behavioral models that we consider involve naive and sophisticated agents. A naive Alice will make week-one choices unawares that her week-two self may not comply with her planned choices in week one. A sophisticated Alice will engage in strategic behavior vis-a-vis her week-two self. She will choose consumption in week one, fully internalizing the optimal response that her preferences will implement in week two. A sophisticated Alice's choices are equilibrium outcomes of a game between the preferences she holds at different points in time.

We provide characterizations of the datasets that are rationalizable by means of the naive, and of the sophisticated, models. The key ideas, as in most studies of revealed preference theory, is to connect the theoretical behavioral models to the configurations in the data that allow us (as analysists) to draw conclusions about the direction of the agents' preferences. In other words, to identify the configurations of data that define the correct notion of revealed preference. The axioms then take the form of standard ``strong axiom of revealed preference,'' acyclicity, conditions. These are discussed in detail in Section~\ref{sec:generalmodel}, where we also provide examples that illustrate when the axioms are violated. Our results are stated in Section~\ref{sec:characterizations} and proven in Section~\ref{sec:proofs}.

\subsection*{Related Literature}
To deal with the problem of unobservability of dynamic inconsistency, behavioral economists usually look for what \cite{o1999doing} call ``smoking guns,'' such as a costly effort to constraint one's future actions. A review of the empirical literature may be found in \cite{frederick2002time}.

Separately from the pure models of dynamic inconsistency emphasized in our paper, there are good reasons to think that attitudes towards risk and uncertainty lie behind some forms of non-exponential discounting. These arguments are developed in \cite{halevy2008strotz} and \cite{chakraborty2020relation}.

The closest paper to our work is \cite{blowbrowningcrawford2021}, who provide a characterization of the datasets that are consistent with the model of a  sophisticated quasi-hyperbolic consumer. We argue here that their notion of rationalization does not coincide with our definition, and discuss some of the differences. We take their work as a motivation for ours, but there are of course important differences. Their tests are truly meant for survey data, and therefore they insist on a single observation for each consumer. Our datasets require multiple observations --- which is standard, on the other hand, in revealed preference theory (for example, \cite{afriat1967construction}, who does not explicitly deal with dynamic decision making). As applicable to intertemporal choice, our test are more suitable for laboratory experiments. We discuss these issues further in Section~\ref{sec:discussion}.

When turning to the model of abstract choice in Section~\ref{sec:generalmodel}, we follow the tradition in \cite{arrow1959rational} of pursuing the general empirical content of rational choice, aside from the specific structure of consumer choice in Euclidean consumption spaces and Walrasian budgets. The closest work to ours in this setting is \cite{ray2001game} and \cite{bossertsprumont}, who consider the testable implications of game-theoretic models of dynamic choice (in a sense, following up on \cite{mcgarvey1953theorem}, who studies binary voting trees). \cite{ray2001game} characterize the testable implications of subgame-perfect Nash equilibrium when all subgames of a given extensive form game are observed, which essentially means that ``planned" (off-path) choices are observed. \cite{bossertsprumont} characterize the testable implications (and lack thereof) for choice that is the outcome of an unobserved dynamic strategic interaction (see also the extension by \cite{REHBECK2014207}). In our model, in contrast, the extensive-form game in question is observable and fixed by the dynamic multiple-self interaction we have described in the introduction. What is not observed in our model are the agents' off-path choices.\footnote{The literature on revealed preference and game theoretic models is reviewed in \cite{chambers2016revealed}.} Finally, we should mention \cite{manzinimariotti2007} who characterize a model of sequential choice but restricted to the same choice problem. This means that they do not face the same equilibrium questions that are present in our model.

The work of \cite{echenique2020testable} characterizes the testable implications of several different models of intertemporal choice, including quasi-hyperbolic discounting with a concave utility function. Their work, however, is restricted to a single planned choice made in the first period of choice. Our paper instead considers on-path choices, when plans are not observable.

The issues with the first-order approach that we discuss in Section~\ref{sec:bbcqhd} are, to some extent, predicted by \cite{pelegyaari1973}, who point out the difficulties with preserving convexity of preferences when solving for equilibrium by backward induction (see, however, \cite{goldman1980}).

\section{A special model: quasi-hyperbolic discounting and consumption choice.}\label{sec:bbcqhd}

We first turn our attention to dynamically inconsistent consumer choice, and in particular to quasi-hyperbolic discounting: the most popular model used to capture dynamically inconsistent choices in applied economic theory.

We focus on a three-period model because it is the simplest case in which the assumption of hyperbolic discounting has any bite. The three-period model is really a model of two-period choice, because period three is purely ``residual;'' which we then generalize in Section~\ref{sec:generalmodel}. We should, however, emphasize that the results we discuss here and in Section~\ref{sec:focsrat} hold for any finite number of periods. 

A consumer is choosing quantities of a single good in periods $t=1,2,3$. She has a wealth $m$, and faces prices $p_t$ for consumption in period $t$. These prices may be interpreted as encoding interest rates. In the paper, we normalize any price vector so that $p_3=1$. Given prices and wealth, a consumption stream $(x_1,x_2,x_3)$ is affordable if $p\cdot x= \sum_{t=1}^3p_tx_t\leq m$.

The standard exponential-discounting model assumes that preferences over a consumption stream $x=(x_1,x_2,x_3)\in\Re^3_+$ are described by a pair $(u,\da)$, with $u:\Re_+\to\Re$, and $\da>0$. The consumer evaluates a consumption stream $x=(x_1,x_2,x_3)$ by  
\[ 
u(x_1)+\da u(x_2) + \da^2 u(x_3),
\] and chooses an optimal affordable consumption stream.
We shall soon impose additional assumptions on $u$ and $\da$.

When her preferences satisfy the exponential discounting model, we can either imagine the consumer choosing a consumption stream as a plan, or alternatively as choosing consumption over time. The assumption of exponential discounting means that the consumer is dynamically consistent. When choosing over time, she has no reason to deviate from her planned consumption.

Under quasi-hyperbolic discounting, the consumer's preferences are described by a tuple $(u,\beta,\da)$, with $u:\Re_+\to\Re$, and $\beta,\da>0$. The consumer evaluates a consumption path $x=(x_1,x_2,x_3)$ by
\[
u(x_1) + \beta[\da u(x_2) + \da^2 u(x_3)].
\]

A sophisticated quasi-hyperbolic consumer chooses consumption that results from an equilibrium between their period-1 preferences and their period-2 preferences. We phrase this as a game played between two agents.\footnote{The game-theoretic formulation is standard in the literature. See, for example, \cite{o1999doing} or \cite{harris2001dynamic}.} Agent~1 chooses consumption in period 1, $x_1$. Agent~2 chooses consumption in period~2, and thus in period~3 because consumption in period 3 is determined by the consumer's overall budget. So Agent~2 chooses $(x_2,x_3)$. 

The relevant equilibrium notion embodies a form of sequential rationality: it is a subgame-perfect Nash equilibrum. A subgame-perfect equilibrium can be described by backward induction: In period 2, given $x_1$, agent 2 maximizes 
\[u(x_1)+\beta\delta u(x_2),\]
subject to $x_2,x_3\geq 0$ and $p_2x_2+p_3x_3\leq m-p_1 x_1$. Let  $s(x_1)=(s_2(x_1),s_3(x_1))$ denote a solution to Agent 2's problem, as a consumption vector in periods 2 and 3, and as a function of the period-1 choice $x_1$.

Agent 1 then solves the problem of choosing period-1 consumption $x_1$ to maximize
\[
u(x_1)+ \beta \delta u(s_2(x_1)) + \beta \delta^2 u(s_3(x_1)),
\] subject to $x_1\geq 0$ and  $p_1x_1\leq m$. If $x^*_1$ is an optimal choice for Agent 1, we say that the pair $(x^*_1,s)$ is a \df{subgame-perfect Nash equilibrium} of the game induced by $(u,\beta,\da)$. In the sequel, we simply write equilibrium to refer to a subgame-perfect Nash equilibrium.

An \df{equilibrium outcome} of the game defined by $(u,\beta,\da)$, for fixed prices $p=(p_1,p_2,p_3)$ and budget $m$, is then a consumption stream $x=(x_1,x_2,x_3)$ for which there exists an equilibrium $(x^*_1,s)$ with $x^*_1=x_1$ and $(x_2,x_3)=s(x_1)$.

When $u$ is smooth ($C^2$), we observe that a sufficient condition for an interior equilibrium outcome is the equation:
\begin{equation}\label{eq:FOCsrat}
u'(x_t) = \la \frac{p_t}{\da^t}\prod_{i=1}^t\frac{1}{1-(1-\beta)\mu_i},   
\end{equation}
where $\la$ is a Lagrange multiplier. The numbers $\mu_t$ are marginal propensities to consume from wealth. 

A standard calculation, assuming interior solutions and using the Implicit Function Theorem (which requires $u''\neq 0$ at a solution), yields that 
\begin{equation}\label{eq:strongfocs}
    \mu_2 = \frac{\beta\da p^2_2\, u''(x_3)}{u''(x_2)+ \beta\da p^2_2\, u''(x_3)},
\end{equation}
while $\mu_3=1$ ($\mu_1$ plays no role in the equilibrium analysis as $\lambda$ is a free parameter).

Equation~\eqref{eq:FOCsrat} is due to Definition 1 of \cite{blowbrowningcrawford2021} and Equation \eqref{eq:strongfocs} is due to Lemma 1 of  \cite{blowbrowningcrawford2021} (see also \cite{harris2001dynamic}). We also derive them directly in Section \ref{sec:prep1}.

\subsection{Data}\label{sec:bbcdata}
A \df{dataset} is a pair $(x,p)$, where $x\in\Re^3_{++}$ and $p\in \Re^3_{++}$. The interpretation is that we observe a consumption stream $x$, chosen when the prices are $p$, and income (or budget) is $m=p\cdot x$. Importantly, $x$ is the observed, or realized, consumption choice. Not a plan. 

Note that we restrict attention to data with strictly positive consumption. We will discuss the viability of the ``first-order approach,'' so we assume away the obvious question of whether corner solutions could mean that interior first-order conditions are incorrect. The points we wish to make are orthogonal to  the existence of corner choices (in fact, our main results continue to hold if we assume data to be non-negative).

Finally, in this section we shall restrict attention to datasets with a single observation; see the discussion in Section~\ref{sec:discussion}. 

\subsection{Rationalizability and FOCs rationalizability}
We now introduce the relevant notions of rationalizability: what it means for a dataset to be consistent with a particular theory of consumer choice.

Let $\mathcal{U}$ be the set of all monotone increasing, $C^2$, and concave functions $u:\Re_+\to\Re$. And let $\mathcal{U}_{+}$ be the set of all $u\in \U$ that are strictly concave.

\begin{defn}\label{def:expdisct}
A dataset $(x,p)$ is \df{rationalizable by the exponential discounting model} if there exist $u\in \mathcal{U}$ and $\da\in (0,1]$ for which $x$ solves the problem of maximizing  $u(z_1)+\da u(z_2)+\da^2 u(z_3)$ over the budget set 
\[
\{z\in\Re^3_+: p_1 z_1+ p_2 z_2 + z_3\leq p\cdot x\}.
\] When this occurs, we say that $(u,\da)$ is a \df{rationalization} of the data by means of the exponential discounting model.
\end{defn}

We note that one may be interested in exponential discounting without imposing smoothness or concavity of utility, but we follow \cite{blowbrowningcrawford2021} in these assumptions because we want to address the results in their paper.

\begin{defn}
A dataset $(x,p)$ is \df{equilibrium rationalizable by the sophisticated quasi-hyperbolic model} if there exists $(u,\beta,\da)$, with $u\in\mathcal{U}_{+}$, $\beta\in(0,1)$, and $\da\in (0,1]$, for which $(x_t)_{t=1}^3$ is an equilibrium outcome of the game defined by $(u,\beta,\da)$ for prices $p$ and budget $m=p\cdot x$. When this occurs, we say that $(u,\da,\beta)$ is a rationalization of the data by means of the sophisticated quasi-hyperbolic discounting model.    
\end{defn}

Again, we follow \cite{blowbrowningcrawford2021} in imposing strict concavity of utility, $\da\leq 1$ and $\beta<1$.\footnote{The definition in \cite{blowbrowningcrawford2021} explicitly requires concavity, not strict concavity; but they assume that the consumption function is differentiable, which (as far as we know) relies on an application of the Implicit Function Theorem that rules out a zero of the second derivative of the instantaneous utility function. Blow et al.\ refer to \cite{harris2001dynamic}, who do assume strict concavity of utility. So do \cite{krusell2003consumption}, in a paper that studies a closely related model.} Indeed, Blow et al.\ emphasize that if one allows for $\da>1$ then their conditions amount to checking for GARP.


The definitions of rationalizability by exponential discounting, or by the sophisticated quasi-hyperbolic discounting models, require checking complicated optimization and equilibrium properties. The literature on revealed preference theory, following the seminar work of \cite{afriat1967construction}, often focuses on the data satisfying the first-order conditions in the model.

\begin{defn}\label{defn:focsrats} 
\leavevmode
\begin{enumerate}
\item[i)] A dataset $(x,p)$ is \df{FOCs rationalizable as exponential discounting} if there is $\da>0$ and $u\in \U$ so that 
\[
u'(x_t) = \frac{p_t}{p_{t+1}}\da u'(x_{t+1}).
\] Such a pair $(u,\da)$ is a FOCs rationalization of exponential discounting. 

   \item[ii)] A dataset $(x,p)$ is \textbf{FOCs rationalizable} by the sophisticated quasi-hyperbolic model if there exists $(u,\beta,\da,(\mu_t)_{t=1}^3)$ such that $u\in \mathcal{U}_{+}$, $\la>0$, $\beta\in(0,1)$, $\da\in (0,1]$, $\mu_t\in (0,1)$ for $t=1,2$, and $\mu_3=1$; so that Equation (\ref{eq:FOCsrat}) is satisfied.
We say that the triple $(u,\beta,\da,(\mu_t)_{t=1}^3)$ is a \df{FOCs rationalization by the sophisticated quasi-hyperbolic model}.

\item[iii)] A dataset $(x,p)$ is \textbf{strong FOCs rationalizable} by the sophisticated quasi-hyperbolic model if there exists a FOCs rationalization $(u,\beta,\da,(\mu_t)_{t=1}^3)$ that satisfies Equation (\ref{eq:strongfocs}).\smallskip

\end{enumerate}
\end{defn}

Definition \ref{defn:focsrats}(\textit{i}) simply imposes the well-known first-order condition (or Euler equation) for maximizing a discounted utility function. FOCs rationalizability for exponential discounting was developed, and applied, by \cite{browning1989nonparametric}.

Definition~\ref{defn:focsrats}(\textit{ii}) of FOCs rationalizability for sophisticated quasi-hyperbolic discounting was introduced by \cite{blowbrowningcrawford2021}. FOCs rationalizability does not connect the numbers $\mu_t$ with the rationalizing instantaneous utility function $u$, so we present the stronger definition~\ref{defn:focsrats}(\textit{iii}) which imposes such a connection by way of Equation~\eqref{eq:strongfocs}.

\cite{afriat1967construction} shows that the first-order approach is valid for the general model of utility maximization in consumer theory. A similar statement is true for exponential discounting. It is presented here without proof.

\begin{proposition}\label{prop:expdisct}
A dataset is FOCs rationalizable by the exponential discounting model if and only if it is rationalizable by the exponential discounting model. Moreover, any FOCs rationalization is also a rationalization as the exponential discounting model.
\end{proposition}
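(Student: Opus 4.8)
The plan is to prove the two implications separately, and then extract the "moreover" clause almost for free. The easy direction is that rationalizability implies FOCs rationalizability: if $(u,\da)$ rationalizes $(x,p)$ as exponential discounting, then $x$ maximizes $u(z_1)+\da u(z_2)+\da^2 u(z_3)$ over the budget $\{z\in\Re^3_+ : p_1 z_1 + p_2 z_2 + z_3 \le p\cdot x\}$. Since $u$ is monotone increasing, the budget constraint binds, and since $u$ is $C^2$ and concave the objective is concave and differentiable, so the Karush–Kuhn–Tucker conditions are necessary at the optimum $x$. Writing out the first-order conditions with multiplier $\la$ on the budget constraint, we get $\da^{t-1} u'(x_t) \le \la p_t$ with equality whenever $x_t > 0$; comparing consecutive periods and eliminating $\la$ yields $u'(x_t) = \tfrac{p_t}{p_{t+1}} \da\, u'(x_{t+1})$ at any interior point. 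The only care needed is at corner solutions $x_t = 0$; here one should note that the stationarity condition can be made to hold by a harmless local modification of $u$ near $0$ (or one simply observes that the Euler equation as stated is an equation about the given $u$ at the given $x$, and the KKT inequalities can be turned into equalities by adjusting $u'$ at the boundary without disturbing concavity, monotonicity, or the optimality of $x$). I expect the author handles corners either by this device or by assuming $x \gg 0$.

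For the converse — which is where Afriat-style arguments usually have their substance — suppose $(u,\da)$ is a FOCs rationalization, so $u$ is monotone increasing, $C^2$, concave, $\da > 0$, and $u'(x_t) = \tfrac{p_t}{p_{t+1}} \da\, u'(x_{t+1})$ for $t=1,2$. I want to show $x$ actually maximizes $U(z) := u(z_1) + \da u(z_2) + \da^2 u(z_3)$ over the budget set with income $m = p\cdot x$. The key point is that $U$ is a concave function (a nonnegative combination of concave functions), and the budget set is convex, so the first-order conditions are not merely necessary but \emph{sufficient} for a global maximum. Concretely: set $\la := u'(x_1)/p_1 > 0$. The FOC chain gives $\da^{t-1} u'(x_t) = \la p_t$ for $t = 1,2,3$ (using $p_3 = 1$). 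Then for any feasible $z$ in the budget set, concavity of $U$ gives $U(z) - U(x) \le \nabla U(x)\cdot(z - x) = \sum_t \da^{t-1} u'(x_t)(z_t - x_t) = \la\, p\cdot(z-x) = \la(m - m) \le 0$, using $p\cdot z \le m = p\cdot x$ and $\la > 0$. Hence $x$ is optimal, so $(u,\da)$ rationalizes the data — and this is exactly the same pair $(u,\da)$, which gives the "moreover" statement. One should also double-check the definitional requirement $\da \le 1$: if the FOCs rationalization has $\da > 0$ but not necessarily $\da \le 1$, a small argument is needed, but since the problem allows $\da \in (0,1]$ and monotonicity already pins things down, I'd expect either that $\da \le 1$ is built into the FOCs definition or that one can rescale. (Re-reading the excerpt, the FOCs definition only says $\da > 0$, so a brief remark reconciling this with the rationalizability definition's $\da \in (0,1]$ may be needed, perhaps by noting the two-period-plus-residual structure forces no real constraint, or the author simply restricts attention to the relevant range.)

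The main obstacle, such as it is, is conceptual rather than technical: one must recognize that \emph{concavity of the objective is exactly what makes the first-order conditions sufficient}, which is why this proposition is "immediate" for exponential discounting but — as the paper will go on to show in Theorem~\ref{thm:example} — fails for the sophisticated quasi-hyperbolic model, where the reduced objective Agent~1 faces (after substituting Agent~2's response $s(x_1)$) need not be concave in $x_1$. So the proof is short, but it is worth phrasing the sufficiency step cleanly (via the gradient inequality for concave functions) precisely to set up the contrast with the later negative result. The only genuinely fussy points are boundary/corner consumption levels and the bookkeeping around the range of $\da$, neither of which should require more than a sentence or two.
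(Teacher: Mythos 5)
The paper omits its own proof of this proposition (it is stated as immediate, in the spirit of Afriat/Browning), and your core argument is exactly the intended one: the ``moreover'' clause and the sufficiency direction follow from the gradient inequality for the concave objective $U(z)=u(z_1)+\da u(z_2)+\da^2 u(z_3)$, with $\la=u'(x_1)/p_1$ and the Euler chain giving $\da^{t-1}u'(x_t)=\la p_t$, so $U(z)-U(x)\le \la\, p\cdot(z-x)\le 0$ on the budget set. That part is correct and cleanly phrased. You are also right to flag the mismatch between $\da>0$ in the FOCs definition and $\da\in(0,1]$ in the rationalizability definition; it is a real gap in the statement as written (e.g.\ $x=(c,c,c)$ with $p=(1/4,1/2,1)$ is FOCs rationalizable only with $\da=2$, and an interior optimum with $\da\le 1$ is impossible), so the proposition implicitly needs the FOCs definition to carry $\da\in(0,1]$ as well.

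The genuine gap is in your treatment of corners in the necessity direction. The claimed ``harmless local modification of $u$ near $0$'' does not work in general, because when two consumption levels coincide the Euler equations pin down $\da$, and this can clash with concavity. Concretely, take $x=(0,0,1)$ and $p=(8,2,1)$, so $m=1$. This is rationalizable: with $\da=1$ and, say, $u(z)=5z-e^{-z}$, the KKT conditions $u'(0)\le \la p_1$, $u'(0)\le \la p_2$, $u'(1)=\la p_3$ hold with $\la=u'(1)$, and concavity makes them sufficient. But it is not FOCs rationalizable by \emph{any} admissible $(v,\da)$: $x_1=x_2$ forces $v'(0)=\tfrac{p_1}{p_2}\da\, v'(0)$, hence $\da=1/4$, and then the second Euler equation gives $v'(0)=\tfrac12 v'(1)$, i.e.\ $v'(1)>v'(0)$, contradicting concavity (and $v'(0)>0$ since $v$ is increasing and concave). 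So ``rationalizable $\Rightarrow$ FOCs rationalizable'' genuinely fails at boundary data, and no modification of $u$ rescues it; the proposition must be read under interiority ($x\gg 0$, as in all the paper's examples), in which case your KKT argument for necessity is fine as stated and no boundary device is needed.
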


\subsection{FOCs rationalizability for the quasi-hyperbolic model}\label{sec:focsrat}

Let FOC (respectively SFOC and EQ) be the set of all datasets that are FOCs (respectively strong FOCs and equilibrium) rationalizable. Let D be the set of datasets $(x,p)$ that satisfy $x_t\neq x_s$ for all $t\neq s$, and I be the set of datasets with $p_1>p_2>p_3$ and $p_1/p_2> p_2/p_3$.

\begin{theorem}\label{thm:example}\leavevmode
\begin{enumerate}
    \item $\textrm{EQ} \subseteq \textrm{SFOC}\subsetneq \textrm{FOC}$, 
    \item $\textrm{D}\cap \textrm{FOC}\subsetneq \textrm{SFOC}$, 
    \item and $\textrm{I}\subseteq \textrm{FOC}$.
\end{enumerate}
\end{theorem} 

A proof of Theorem~\ref{thm:example} is in Section~\ref{sec:proofs}, as are the proofs of all results in the paper.

To unpack the theorem, we discuss the different claims it contains. First it states the obvious logical relation among the different notions or rationalization:  $\textrm{EQ} \subseteq \textrm{SFOC}\subseteq \textrm{FOC}$; but in contrast with the message of Proposition~\ref{prop:expdisct} for exponential discounting, there is a gap between the notion of equilibrium and FOCs rationalization (EQ $\neq$ FOC), with the latter being strictly more permissive. The gap already shows up in comparing FOCs and strong FOCs rationalizations, so that SFOC is a strict subset of FOC.

The second statement addresses the, perhaps, most glaring issue with the notion of FOCs rationalization: The disconnect between $u$ and $\mu_t$s. One could try to amend the first-order approach by insisting on strong FOCs, but Theorem~\ref{thm:example} says that, as long as consumption in different time periods is distinct, it is always possible to line up the $\mu_t$ numbers with the intended rationalization. So strong FOCs seems to be too permissive as well.\footnote{See Theorem~\ref{thm:manyobs} in Appendix A for further evidence on this claim. When we allow for multiple observations, we exhibit a dataset that is in SFOC but not in EQ.}

The third statement provides additional evidence about the weakness of FOCs rationalization. No matter what the values of consumption are, as long as a dataset satisfies the assumption on prices in I, then it is FOCs rationalizable. It is worth mentioning that such prices are compatible with data that refute the exponential discounting model.\footnote{For example, consumption $x=(3,2,4)$ at prices $p=(8,2,1)$ violate the SAR-EDU of \cite{echenique2020testable} by means of the balanced sequence $(x_1,x_2),(x_3,x_2)$.}

Our next result speaks to the use of FOCs rationalizability to recover, or estimate, the parameters of a quasi-hyperbolic utility function.

\begin{theorem}\label{thm:delta} \leavevmode
\begin{enumerate}
    \item Let $\delta^*\in (0, 1)$ and $\beta^*\in (0, 1)$. There is an equilibrium rationalizable dataset $(x,p)$ such that: a) $\delta=\delta^*$ and $\beta=\beta^*$ for any equilibrium rationalization $(u, \beta, \delta)$ of $(x,p)$, and b) there is also a FOCs rationalization $(u, \beta', \delta')$ of $(x,p)$ with $\delta'=1$.
\item
There exists $(x,p)\in \textrm{SFOC}$ with a (strong FOCs) rationalization $(u,\beta,\da)$ so that $x$ is not an equilibrium outcome of $(u,\beta,\da)$.
\end{enumerate}
\end{theorem}

Estimating $\beta$ and $\da$ is an important empirical exercise. Discount factors matters critically for welfare comparisons, and is a key input in policy decisions. But the proof of Proposition 1 in \cite{blowbrowningcrawford2021} shows that whenever a dataset is FOCs rationalizable, it is without loss of generality to assume $\delta=1$. Theorem~\ref{thm:delta} claims that the reason is the permissiveness of FOCs. The assumption that $\delta<1$, for example, has additional empirical content when we focus on equilibrium rationalizability rather than FOCs.

Finally, the second statement in Theorem~\ref{thm:delta} speaks to the possibility of using a FOCs, or strong FOCs, rationalization in order to recover utility parameters. The theorem says that a rationalization may not have an equilibrium outcome that coincides with the data, which would mean that the rationalizing parameters could not generate the observed data. This problem is a particular challenge for the analysis in Section~3.4 of \cite{blowbrowningcrawford2021}, in which the authors set out to recover preferences based on a FOCs rationalization. The recovered preferences may not explain the data according to the sophisticated quasi-hyperbolic model.

\section{A general model of two-period choice}\label{sec:generalmodel}

We now turn to an analysis of a general model of two-period choice, again with two periods being the bare minimum needed to discuss dynamically inconsistent choices. The case of more than two periods is treated in Section~\ref{sec:morethan2}, where it becomes clear that the two-period analysis already captures the empirical content of sophisticated dynamically inconsistent choice.

We assume given a set $X=X_1\times X_2$ of possible alternatives, or choices, that can be made. First, Agent 1 chooses $x_1\in X_1$. Second, after observing 1's choice of $x_1$, Agent 2 chooses $x_2\in X_2$. The outcome is the pair $\mathbf{x}=(x_1,x_2)\in X$; and agents have potentially different preferences $\succsim_1$ and $\succsim_2$ over $X$. We allow for very general preferences, nesting the quasi-hyperbolic model discussed until now, as well as many other models of intertemporal choice.

When agents 1 and 2 have  different preferences, the outcome may be dynamically inconsistent in the following sense: Agent 1 chooses $x_1\in X_1$ as part of a ``plan'' $(x_1,x'_2)\in X$. Agent 2, however, may not comply with the intended plan of 1, and choose $x_2\in X_2$ according to their preference $\succsim_2$. 

For example, to accommodate the setting in Section~\ref{sec:bbcqhd}, we can have $X_1=\Re_+$ and $X_2=\Re^2_+$. In this case, Agent 1 chooses period-1 consumption $x_1$ and has preferences over $X$ represented by $u(x_1)+\beta\da u(y_2)+\beta\da^2u(y_3)$; while  Agent 2 chooses consumption in periods 2 and 3, $x_2=(y_2,y_3)\in\Re^2_+$, with preferences over $X$ that are represented by $u(y_2)+\beta\da u(y_3)$. The source of dynamic inconsistency is that 1's marginal rate of substitution for consumption in periods 2 and 3 is $\da u'(y_3)/u'(y_2)$, while it is  $\beta \da u'(y_3)/u'(y_2)$ for Agent 2.

More generally, any setting with choices over many periods can be accommodated by considering choices in the first period and the rest. In stationary environments with a recursive formulation, one could study the full dynamic problem by means of two-period decisions.

Our focus is on observable choices from a family of feasible sets of choices: \df{budgets} $B\subseteq X$. Given a budget $B$, we denote by $B_1=\{x_1\in X_1: (x_1,x_2)\in B \text{ for some } x_2\in X_2\}$ its projection onto $X_1$; and by $B_2(x_1)=\{x_2\in X_2: (x_1,x_2)\in B \}$ its section at $x_1\in X_1$.  Given a budget $B$, Agent 1 may choose $x_1\in B_1$ as part of a plan $(x_1,x'_2)\in B$. Then Agent 2 chooses $x_2\in B_2(x_1)$. For example, in the model of Section~\ref{sec:bbcqhd}, a budget is defined by prices $p$ and income $m$, and takes the form $B(p,m)=\{(x_1,y_2,y_3)\in X:p_1x_1+p_2y_2+p_3y_3\leq m \}$.

\subsection{Behavioral assumptions}

Now we may envision different behavioral assumptions. The first, naive, model has Agent 1 choosing a plan  $(x_1,x'_2)\in B$, unawares that Agent 2 is not committed to following the plan, and may choose $x_2\neq x'_2$. The {\em observed} outcome is then $(x_1,x_2)$. As analysts, we never get to see 1's intended $x'_2$.

The second, ``Nash,'' model assumes that Agent 1 chooses $x_1\in B_1$ whilst correctly predicting that Agent 2 will choose $x_2\in B_2(x_1)$. This model corresponds to a Nash equilibrium between the players, if they were to move simultaneously. As analysts, we again observe the realized choices $(x_1,x_2)$.

Finally, the fully sophisticated model envisions the dynamic game between Agent 1 and 2 that we have emphasized in Section~\ref{sec:bbcqhd}. Agent 1 fully understands that Agent 2 is choosing $x_2$ according to her preferences, and that her choices are constrained to $B_2(x_1)$. This model is Stackeleberg to the Nash model's Cournot. Key to Agent 1's behavior is her understanding of all the choices that 2 will make in response to different $x_1$. As analysts, however, we only observe the on-path choices $(x_1,x_2)$. We must infer that $x_2$ is optimal for Agent 2 in $B_2(x_1)$, and that $x_1$ is optimal for Agent 1 among all the pairs $(x'_1,x'_2)$ that she could obtain by selecting $x'_1\in B_1$, and predicting an optimal choice $x'_2\in B_2(x'_1)$ by Agent 2.

\subsection{Data}

Key to our analysis is the assumption of choice from multiple budgets. So we depart from the assumption of a single budget in Section~\ref{sec:bbcqhd}. A \df{dataset} is a collection $\mathcal{D}=\{\mathbf{x}^k, B^k\}_{k\in K}$ where $B^k\subseteq X$ and $\mathbf{x}^k\in B^k$. Each dataset is comprised of a finite collection of observations $(\exb^k,B^k)$, where $B^k$ is a budget and $\exb^k=(x^k_1,x^k_2)\in B^k$. The interpretation is that, when faced with budget $B^k$, Agent 1 chose $x^k_1\in B^k_1$ and Agent 2 chose $x^k_2\in B^k_2(x^k_1)$.

\subsubsection{Notational conventions}

Preference relations, which are denoted as $\succsim$, over $X$ are assumed to be complete and transitive (that is, weak orders). The set of maximal elements according to a preference $\succsim$ is denoted by \[
\max(B, \succsim)=\{\mathbf{x}\in B| \mathbf{x}\succsim \textbf{y}\text{ for any }\textbf{y}\in B\}.
\]

Finally, we use $B^k(x_1)=\{\mathbf{x}\in X|\exists x_2\in X_2\text{ s.t. }\mathbf{x}=(x_1, x_2)\in B^k\}$ and
$B^k(\cdot, x_2)=\{\mathbf{x}\in X|\exists x_1\in X_1\text{ s.t. }\mathbf{x}=(x_1, x_2)\in B^k\}$.\footnote{Note that $B^k(x_1)$ is a subset of $X$ while $B^k_2(x_1)$ is a subset of $X_2$.}

\subsubsection{Notions of rationalization}
Given our notion of data, and the different behavioral models one may use to explain it, there are still some questions regarding the exact nature of a rationalization. For example, if one allows for preferences that are indifferent among all outcomes, then in principle there are no testable implications from any model. The problem of discipline is discussed in detail in \cite{chambers2016revealed}, who propose two different paradigms: weak and strong rationalization. Here we are essentially adopting the strong viewpoint, which means that we insist that the predicted theoretical outcomes exactly match the observed choices. (The alternative, weak, notion would only require that observations are among the theoretically predicted outcomes, but that preferences satisfy some additional properties that rule out trivial rationalizations.)

We first consider the naive model outlined above, whereby the agent who moves first does not know, or realize, that the second agent has different preferences.

A data set $\mathcal{D}$ is \df{naively rationalizable} if there exist preference relations $\succsim_1$ and $\succsim_2$ on $X$ such that for each observation $k\in K$,\begin{equation}\label{EqN1}
\max\big(B^k, \succsim_1\big)\in B^k(x^k_1)\end{equation}
and
\begin{equation}\label{EqN2}
\max\big(B^k(x^k_1), \succsim_2\big)=\mathbf{x}^k.
\end{equation}

The meaning of $\max\big(B^k, \succsim_1\big) \in B^k(x^k_1)$ is that there is a uniquely optimal choice for $\succsim_1$ in $B^k$, and it involves choosing $x^k_1\in X_1$. It also involves some unobserved $z^k_2\in X_2$, so $\max\big(B^k, \succsim_1\big) = (x^k_1,z^k_2)$, but the rationalization is free to construct this counterfactual ``planned'' choice. The second choice, made by Agent 2, is then the best element $\mathbf{x}^{k}$ of $B^k(x^k_1)$.


A data set $\mathcal{D}$ is \df{naively Nash rationalizable} if there exist preference relations $\succsim_1$ and $\succsim_2$ on $X$ such that for each observation $k\in K$,\begin{equation}\label{EqNN1}
\mathbf{x}^k\in \max\big(B^k(\cdot, x^k_2), \succsim_1\big)\end{equation}
and
\begin{equation}\label{EqNN2}
\max\big(B^k(x^k_1), \succsim_2\big)=\mathbf{x}^k.
\end{equation}

Going back to the discussion of the different possible behavioral assumptions one could use, this definition says that there are preferences for which the observed outcomes constitute a Nash equilibrium between Agent 1 and 2.

One may desire a stricter discipline on the rationalization, and impose that the observed choice by Agent 1 is their unique optimal action. This leads to the next notion of rationalization.

A data set $\mathcal{D}$ is \df{strictly naively Nash rationalizable} if there exist preference relations $\succsim_1$ and $\succsim_2$ such that for each observation $k\in K$,\begin{equation}\label{EqSNN1}
\max\big(B^k(\cdot, x^k_2), \succsim_1\big)=\mathbf{x}^k\end{equation}
and
\begin{equation}\label{EqSNN2}
\max\big(B^k(x^k_1), \succsim_2\big)=\mathbf{x}^k.
\end{equation}





A data set $\mathcal{D}$ is \df{sophisticated rationalizable} if there exist preferences $\succsim_1$ and $\succsim_2$ on $X$ such that for each observation $k\in K$,
\begin{equation}\label{Eq5}
\max\Big(\bigcup_{x_1\in B^k_{1}}\max\big(B^k(x_1), \succsim_2\big), \succsim_1\Big)=\mathbf{x}^k.\end{equation}

It should be clear that the definition corresponds to our third behavioral model, in which Agent 2 will choose, for each possible $x_1\in B^k_1$, a maximal element of $B^k(x_1)$. Given that, the Agent 1 chooses $x^k_1$ that maximizes her preference and the second agents chooses $x^k_2$ given $x^k_1$. In the setting of Section~\ref{sec:bbcqhd}, sophisticated rationalizability corresponds to sophisticated quasi-hyperbolic rationalizability.

\subsection{Axioms}\label{sec:axioms}

A property analogous to the strong-axiom of revealed preference (SARP) is obviously going to be important in characterizing rationality. The first such axiom is easy to formulate and interpret, once we understand that $\mathbf{x}^{k_l}\in B^{k_{l+1}}(x^{k_{l+1}}_1)$ means that $\exb^{k_{l+1}}$ is revealed preferred to $\exb^{k_l}$:

\begin{namedaxiom}[N-SARP]
    There is no sequence $k_1, \ldots, k_L$ of $K$ such that for each $l\le L$, $\mathbf{x}^{k_l}\neq \mathbf{x}^{k_{l+1}}$ and 
\begin{equation}\label{Eq3}
\mathbf{x}^{k_l}\in B^{k_{l+1}}(x^{k_{l+1}}_1).\end{equation}
\end{namedaxiom}

Figure~\ref{fig:nsarp} exhibits a violation of N-SARP. In the figure, $X_1=\{x_1,x'_1\}$ and $X_2=\{x_2,x'_2,x''_2 \}$. There are two observations, shown on the left and the right in the figure. In the figure on the left, we have that $B_1$ is either $\{x_1\}$ or $\{x'_2\}$. The feasible choices for Agent 2 are illustrated in the game tree. So, $B^1=X\setminus \{(x'_1, x''_2)\}$ and $B^2=X\setminus \{(x_1, x''_2)\}$. Agents choices are depicted in blue, $\mathbf{x}^1=(x_1, x_2)$ and $\mathbf{x}^2=(x_1, x'_2)$.  The figure shows a violation of N-SARP because $\mathbf{x}^1 \in B^2(x^2_1)$ and $\mathbf{x}^2 \in B^1(x^1_1)$. 

\begin{figure}
\begin{tikzpicture}[x=0.5cm, y=0.7cm][domain=0:1, range=0:1, scale=3/4, thick]
\usetikzlibrary{intersections}

\coordinate[label=above:${P1}$] (wf) at (7.5, 10.2);
\coordinate[label=above:${P2}$] (wf) at (2.5, 7);
\coordinate[label=above:${P2}$] (wf) at (10.6, 7);

\filldraw[] (7.5, 10) circle (3pt);
\filldraw[] (10, 7) circle (3pt);
\filldraw[] (3, 7) circle (3pt);
\filldraw[] (0, 4) circle (3pt);
\filldraw[] (3, 4) circle (3pt);
\filldraw[] (5, 4) circle (3pt);
\filldraw[] (13, 4) circle (3pt);
\filldraw[] (8.5, 4) circle (3pt);

\coordinate[label=left:${x_1}$] (wf) at (5, 8.5);
\coordinate[label=left:${x'_1}$] (wf) at (10.2, 8.5);

\coordinate[label=left:${x_2}$] (wf) at (1.5, 5.5);
\coordinate[label=right:${x'_2}$] (wf) at (1.8, 5.5);
\coordinate[label=right:${x''_2}$] (wf) at (4, 5.5);
\coordinate[label=left:${x_2}$] (wf) at (9.3, 5.5);
\coordinate[label=right:${x'_2}$] (wf) at (11.5, 5.5);

\draw[blue][ultra thick](7.5, 10)--(3, 7);
\draw[->](7.5, 10)--(10, 7);
\draw[blue][ultra thick](3, 7)--(0, 4);
\draw[->](3, 7)--(3, 4);
\draw[->](3, 7)--(5, 4);
\draw[->](10, 7)--(13, 4);
\draw[->](10, 7)--(8.5, 4);


\coordinate[label=below:${x_1x_2}$] (wf) at (0, 4);

\coordinate[label=below:${x_1x'_2}$] (wf) at (3, 4);

\coordinate[label=below:${x_1x''_2}$] (wf) at (5.2, 4);

\coordinate[label=below:${x'_1x'_2}$] (wf) at (13, 4);


\coordinate[label=below:${x'_1x_2}$] (wf) at (8.5, 4);

\coordinate[label=above:${P1}$] (wf) at (25.5, 10.2);
\coordinate[label=above:${P2}$] (wf) at (20.9, 7);
\coordinate[label=above:${P2}$] (wf) at (28.6, 7);

\filldraw[] (25.5, 10) circle (3pt);
\filldraw[] (28, 7) circle (3pt);
\filldraw[] (21.5, 7) circle (3pt);
\filldraw[] (18, 4) circle (3pt);
\filldraw[] (22, 4) circle (3pt);
\filldraw[] (31, 4) circle (3pt);
\filldraw[] (28, 4) circle (3pt);
\filldraw[] (25, 4) circle (3pt);

\coordinate[label=left:${x_1}$] (wf) at (23.5, 8.5);
\coordinate[label=left:${x'_1}$] (wf) at (28.2, 8.5);
\coordinate[label=left:${x_2}$] (wf) at (19.8, 5.5);
\coordinate[label=right:${x'_2}$] (wf) at (21.8, 5.5);
\coordinate[label=left:${x_2}$] (wf) at (26.5, 5.5);
\coordinate[label=right:${x'_2}$] (wf) at (26.8, 5.5);
\coordinate[label=right:${x''_2}$] (wf) at (29.5, 5.5);

\draw[blue][ultra thick](25.5, 10)--(21.5, 7);
\draw[->](25.5, 10)--(28, 7);
\draw[->](21.5, 7)--(18, 4);
\draw[blue][ultra thick](21.5, 7)--(22, 4);
\draw[->](28, 7)--(31, 4);
\draw[->](28, 7)--(28, 4);
\draw[->](28, 7)--(25, 4);


\coordinate[label=below:${x_1x_2}$] (wf) at (18, 4);

\coordinate[label=below:${x_1x'_2}$] (wf) at (22, 4);

\coordinate[label=below:${x'_1x'_2}$] (wf) at (28, 4);
\coordinate[label=below:${x'_1x''_2}$] (wf) at (31, 4);


\coordinate[label=below:${x'_1x_2}$] (wf) at (25, 4);

\end{tikzpicture}
\caption{A Violation of N-SARP}\label{fig:nsarp}
\end{figure}
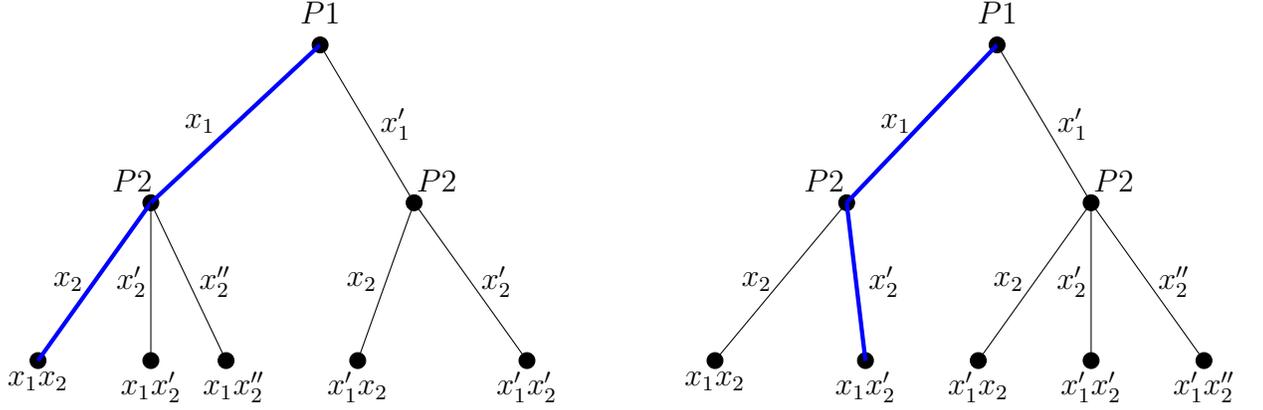

In the current model, the sections of two budget sets that correspond two different period-1 choices do not intersect, so the acyclicity for Agent 2's choices has no bite when Agent 1's choice are always distinct.

\bigskip
\noindent\textbf{Remark 1.} When $x^k_1\neq x^s_1$ for any $k, s\in K$, N-SARP is vacuous. 
\bigskip

A similar notion of acyclicity is meaningful for an Agent 1 who takes Agent 2's choice as given. If we denote by $B^k(\cdot,x^k_2)$ the section of $B^k$ at $x^k_2$, then we obtain our next version of SARP:

\bigskip

\noindent\textbf{NN-SARP.} There is no sequence $k_1, \ldots, k_L$ of $K$ such that for each $l\le L$, $\mathbf{x}^{k_l}\neq \mathbf{x}^{k_{l+1}}$ and 
\begin{equation}\label{EqNN}
\mathbf{x}^{k_l}\in B^{k_{l+1}}(\cdot, x^{k_{l+1}}_2).\end{equation}
\medskip

We now consider an axiom that tackles the non-observability of plans in the naive model. Imagine a dataset where $B^k(x^k_1)\subseteq B^{k'}\setminus B^{k'}(x^{k'}_1)$. Then we must infer that, no matter what Agent 1 planned to choose in period 2 from $B^{k'}(x^{k'}_1)$, it must be preferred by Agent 1 to whatever outcome she planned to choose in period 2 from $B^k(x^k_1)$. This means, then, that we could not have $B^{k'}(x^{k'}_1)\subseteq B^{k}\setminus B^{k}(x^{k}_1)$, as that would lead to the opposite conclusion. In consequence, $B^k(x^k_1)\cup B^{k'}(x^{k'}_1) \subseteq B^{k}\setminus B^{k}(x^{k}_1)\cup B^{k'}\setminus B^{k'}(x^{k'}_1)$ leads to a refutation of the naive model. 

More generally, the property we need is:

\begin{namedaxiom}[Condition 1] For any subset $S$ of $K$,
\begin{equation}\label{Eq4}
\bigcup_{k\in S}B^{k}(x^{k}_1)\not\subseteq \bigcup_{k\in S}B^{k}\setminus B^k(x^{k}_1).\end{equation}
\end{namedaxiom}

The choices in Figure~\ref{fig:nsarp} satisfy Condition 1, but 
Figure~\ref{fig:cond1} exhibits a violation of this condition. In the figure, we have the same $X$ and budgets as in the previous example, but now  $\mathbf{x}^1=(x'_1, x_2)$ and $\mathbf{x}^2=(x_1, x_2)$.  No matter what Agent 1 plans to choose in period 2 following $x_1$, this option was available to her in $B^1$ when she picked $x'_1$ from $B^1$. So it is clear that such data is incompatible with the naive model, and  Condition 1 is violated.

\begin{figure}
\begin{tikzpicture}[x=0.5cm, y=0.7cm][domain=0:1, range=0:1, scale=3/4, thick]
\usetikzlibrary{intersections}

\coordinate[label=above:${P1}$] (wf) at (7.5, 10.2);
\coordinate[label=above:${P2}$] (wf) at (2.5, 7);
\coordinate[label=above:${P2}$] (wf) at (10.6, 7);

\filldraw[] (7.5, 10) circle (3pt);
\filldraw[] (10, 7) circle (3pt);
\filldraw[] (3, 7) circle (3pt);
\filldraw[] (0, 4) circle (3pt);
\filldraw[] (3, 4) circle (3pt);
\filldraw[] (5, 4) circle (3pt);
\filldraw[] (13, 4) circle (3pt);
\filldraw[] (8.5, 4) circle (3pt);

\coordinate[label=left:${x_1}$] (wf) at (5, 8.5);
\coordinate[label=left:${x'_1}$] (wf) at (10.2, 8.5);

\coordinate[label=left:${x_2}$] (wf) at (1.5, 5.5);
\coordinate[label=right:${x'_2}$] (wf) at (1.8, 5.5);
\coordinate[label=right:${x''_2}$] (wf) at (4, 5.5);
\coordinate[label=left:${x_2}$] (wf) at (9.3, 5.5);
\coordinate[label=right:${x'_2}$] (wf) at (11.5, 5.5);

\draw(7.5, 10)--(3, 7);
\draw[blue][ultra thick](7.5, 10)--(10, 7);
\draw(3, 7)--(0, 4);
\draw[->](3, 7)--(3, 4);
\draw[->](3, 7)--(5, 4);
\draw[->](10, 7)--(13, 4);
\draw[ultra thick][blue](10, 7)--(8.5, 4);


\coordinate[label=below:${x_1x_2}$] (wf) at (0, 4);

\coordinate[label=below:${x_1x'_2}$] (wf) at (3, 4);

\coordinate[label=below:${x_1x''_2}$] (wf) at (5.2, 4);

\coordinate[label=below:${x'_1x'_2}$] (wf) at (13, 4);


\coordinate[label=below:${x'_1x_2}$] (wf) at (8.5, 4);

\coordinate[label=above:${P1}$] (wf) at (25.5, 10.2);
\coordinate[label=above:${P2}$] (wf) at (20.9, 7);
\coordinate[label=above:${P2}$] (wf) at (28.6, 7);

\filldraw[] (25.5, 10) circle (3pt);
\filldraw[] (28, 7) circle (3pt);
\filldraw[] (21.5, 7) circle (3pt);
\filldraw[] (18, 4) circle (3pt);
\filldraw[] (22, 4) circle (3pt);
\filldraw[] (31, 4) circle (3pt);
\filldraw[] (28, 4) circle (3pt);
\filldraw[] (25, 4) circle (3pt);

\coordinate[label=left:${x_1}$] (wf) at (23.5, 8.5);
\coordinate[label=left:${x'_1}$] (wf) at (28.2, 8.5);
\coordinate[label=left:${x_2}$] (wf) at (19.8, 5.5);
\coordinate[label=right:${x'_2}$] (wf) at (21.8, 5.5);
\coordinate[label=left:${x_2}$] (wf) at (26.5, 5.5);
\coordinate[label=right:${x'_2}$] (wf) at (26.8, 5.5);
\coordinate[label=right:${x''_2}$] (wf) at (29.5, 5.5);

\draw[blue][ultra thick] (25.5, 10)--(21.5, 7);
\draw(25.5, 10)--(28, 7);
\draw[blue][ultra thick](21.5, 7)--(18, 4);
\draw (21.5, 7)--(22, 4);
\draw[->](28, 7)--(31, 4);
\draw[->](28, 7)--(28, 4);
\draw(28, 7)--(25, 4);


\coordinate[label=below:${x_1x_2}$] (wf) at (18, 4);

\coordinate[label=below:${x_1x'_2}$] (wf) at (22, 4);

\coordinate[label=below:${x'_1x'_2}$] (wf) at (28, 4);
\coordinate[label=below:${x'_1x''_2}$] (wf) at (31, 4);


\coordinate[label=below:${x'_1x_2}$] (wf) at (25, 4);

\end{tikzpicture}
\caption{A Violation of Condition 1 (but not Condition 2)}\label{fig:cond1}
\end{figure}
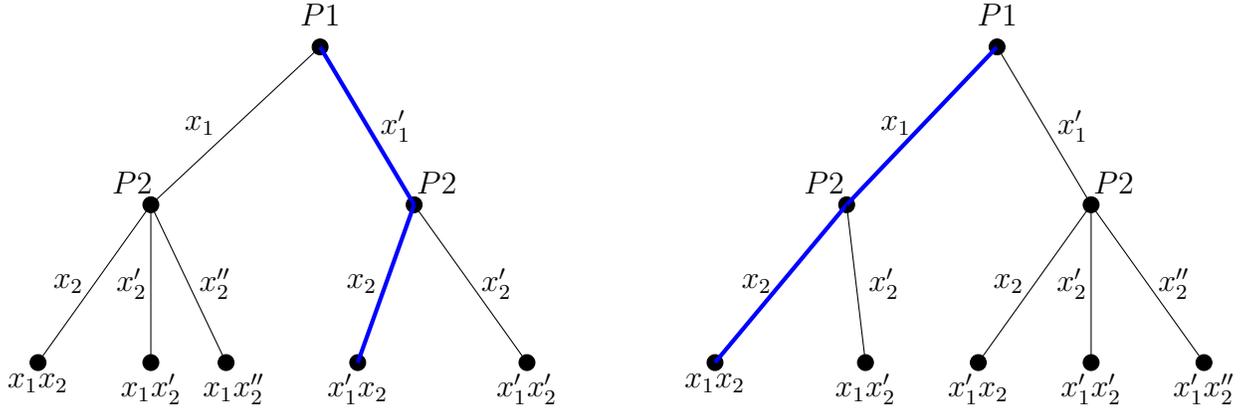

Next, to get a handle on Agent 1's revealed preferences, consider a situation where  
$\mathbf{x}^{k}\in B^{k'}(x^{k}_1)\subseteq B^{k}(x^{k}_1).$ This means that 2's choice in $B^k$ remains feasible in budget $B^{k'}$ after 1 chooses $x^k_1$, and all of 2's feasible choices in budget $B^{k'}$ after 1 chooses $x^k_1$ were feasible in feasible $B^k$. Since 2 picked $x^k_2$ as an optimal response to $x^k_1$ in budget $B^k$, she would still choose it as an optimal response to $x^k_1$ in budget  $B^{k'}$. We can then infer, by the usual logic of one object being chosen when another was feasible, that Agent 1 reveals preferred $x^{k'}_1$ to $x^k_1$. The acyclicity of this revealed preference relation is expressed as

\begin{namedaxiom}[Condition 2]
    There is no sequence $k_1, \ldots, k_L$ of $K$ such that for each $l\le L$, $\mathbf{x}^{k_l}\neq \mathbf{x}^{k_{l+1}}$ and  \begin{equation}\label{Eq7} \mathbf{x}^{k_l}\in B^{k_{l+1}}(x^{k_l}_1)\subseteq B^{k_l}(x^{k_l}_1).\end{equation}
\end{namedaxiom}

Condition 2 is satisfied in the examples we presented in Figures~\ref{fig:nsarp} and~\ref{fig:cond1}. Consider, however, the situation in Figure~\ref{fig:cond3}. The choice sets and budgets are as before, but  $\mathbf{x}^1=(x_1, x_2)$ and $\mathbf{x}^2=(x'_1, x'_2)$. We have $B^2(x_1)\subseteq B^1(x_1)$ and $\exb^1\in B^2(x_1)$, which means that Agent 2 knows that by choosing $x_1\in B^2_1$ she will ensure $\exb^1$. Choosing $x'_1\in B_1$ that is different from $x_1$ reveals a preference for this choice. However, we may make the opposite inference if we reason from the fact that $B^1(x'_1)\subseteq B^2(x'_1)$. This data will, then, contradict the sophisticated model.

It is worth mentioning that the data in Figure~\ref{fig:cond3} satisfy Condition 1, N-SARP, and NN-SARP.

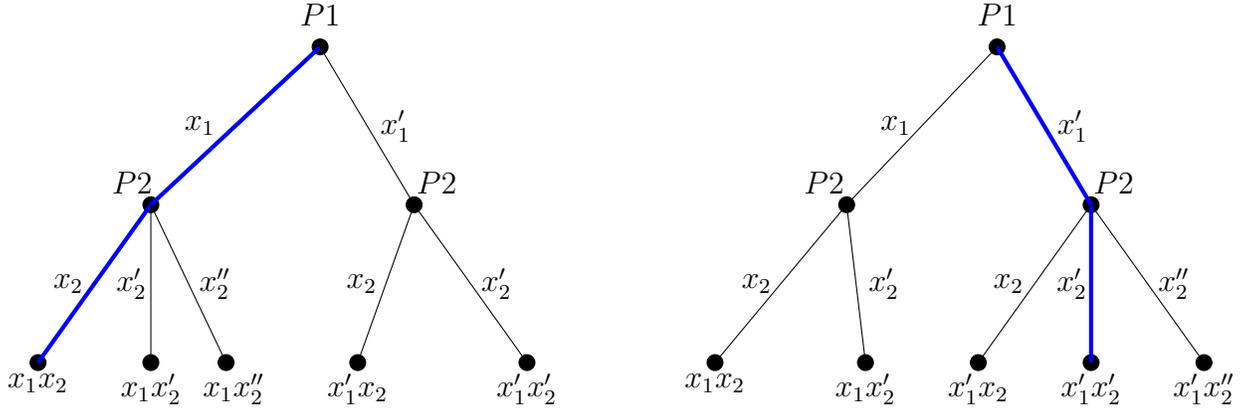
\begin{figure}
\begin{tikzpicture}[x=0.5cm, y=0.7cm][domain=0:1, range=0:1, scale=3/4, thick]
\usetikzlibrary{intersections}

\coordinate[label=above:${P1}$] (wf) at (7.5, 10.2);
\coordinate[label=above:${P2}$] (wf) at (2.5, 7);
\coordinate[label=above:${P2}$] (wf) at (10.6, 7);

\filldraw[] (7.5, 10) circle (3pt);
\filldraw[] (10, 7) circle (3pt);
\filldraw[] (3, 7) circle (3pt);
\filldraw[] (0, 4) circle (3pt);
\filldraw[] (3, 4) circle (3pt);
\filldraw[] (5, 4) circle (3pt);
\filldraw[] (13, 4) circle (3pt);
\filldraw[] (8.5, 4) circle (3pt);

\coordinate[label=left:${x_1}$] (wf) at (5, 8.5);
\coordinate[label=left:${x'_1}$] (wf) at (10.2, 8.5);

\coordinate[label=left:${x_2}$] (wf) at (1.5, 5.5);
\coordinate[label=right:${x'_2}$] (wf) at (1.8, 5.5);
\coordinate[label=right:${x''_2}$] (wf) at (4, 5.5);
\coordinate[label=left:${x_2}$] (wf) at (9.3, 5.5);
\coordinate[label=right:${x'_2}$] (wf) at (11.5, 5.5);

\draw[blue][ultra thick](7.5, 10)--(3, 7);
\draw[->](7.5, 10)--(10, 7);
\draw[blue][ultra thick](3, 7)--(0, 4);
\draw[->](3, 7)--(3, 4);
\draw[->](3, 7)--(5, 4);
\draw[->](10, 7)--(13, 4);
\draw[->](10, 7)--(8.5, 4);


\coordinate[label=below:${x_1x_2}$] (wf) at (0, 4);

\coordinate[label=below:${x_1x'_2}$] (wf) at (3, 4);

\coordinate[label=below:${x_1x''_2}$] (wf) at (5.2, 4);

\coordinate[label=below:${x'_1x'_2}$] (wf) at (13, 4);


\coordinate[label=below:${x'_1x_2}$] (wf) at (8.5, 4);

\coordinate[label=above:${P1}$] (wf) at (25.5, 10.2);
\coordinate[label=above:${P2}$] (wf) at (20.9, 7);
\coordinate[label=above:${P2}$] (wf) at (28.6, 7);

\filldraw[] (25.5, 10) circle (3pt);
\filldraw[] (28, 7) circle (3pt);
\filldraw[] (21.5, 7) circle (3pt);
\filldraw[] (18, 4) circle (3pt);
\filldraw[] (22, 4) circle (3pt);
\filldraw[] (31, 4) circle (3pt);
\filldraw[] (28, 4) circle (3pt);
\filldraw[] (25, 4) circle (3pt);

\coordinate[label=left:${x_1}$] (wf) at (23.5, 8.5);
\coordinate[label=left:${x'_1}$] (wf) at (28.2, 8.5);
\coordinate[label=left:${x_2}$] (wf) at (19.8, 5.5);
\coordinate[label=right:${x'_2}$] (wf) at (21.8, 5.5);
\coordinate[label=left:${x_2}$] (wf) at (26.5, 5.5);
\coordinate[label=right:${x'_2}$] (wf) at (26.8, 5.5);
\coordinate[label=right:${x''_2}$] (wf) at (29.5, 5.5);

\draw (25.5, 10)--(21.5, 7);
\draw[blue][ultra thick](25.5, 10)--(28, 7);
\draw[->](21.5, 7)--(18, 4);
\draw (21.5, 7)--(22, 4);
\draw[->](28, 7)--(31, 4);
\draw[blue][ultra thick] (28, 7)--(28, 4);
\draw(28, 7)--(25, 4);


\coordinate[label=below:${x_1x_2}$] (wf) at (18, 4);

\coordinate[label=below:${x_1x'_2}$] (wf) at (22, 4);

\coordinate[label=below:${x'_1x'_2}$] (wf) at (28, 4);
\coordinate[label=below:${x'_1x''_2}$] (wf) at (31, 4);


\coordinate[label=below:${x'_1x_2}$] (wf) at (25, 4);

\end{tikzpicture}

\caption{A Violation of Condition 2 (but not Condition 1)}\label{fig:cond3}
\end{figure}

\subsection{Characterizations}\label{sec:characterizations}

Our strongest result requires the assumption that no two observed first-period choices are the same. For data that satisfy the assumptions, we can precisely characterize the naive and the sophisticated model. It becomes clear, then, that the gap between these models is exactly the gap between Conditions 1 and 2.

\begin{theorem}\label{thm:sophdistinct} Let $\mathcal{D}$ be a dataset in which  $x^k_1\neq x^s_1$ for any two distinct $k, s\in K$. 
\begin{enumerate}
\item $\D$ is naively rationalizable iff it satisfies Condition 1.
    \item $\D$ is sophisticated rationalizable iff it satisfies Condition 2.
    \item $\D$ is strictly naively Nash rationalizable iff it satisfies NN-SARP.
\end{enumerate}
\end{theorem}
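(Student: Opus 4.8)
The plan is to prove each of the three equivalences along the standard revealed-preference route: for necessity, extract from a rationalization a revealed-preference relation whose acyclicity is exactly the stated axiom; for sufficiency, construct $\succsim_1$ and $\succsim_2$ from the axiom. I would lean throughout on the hypothesis, which guarantees that the sections $B^k(x^k_1)$, $k\in K$, lie in pairwise disjoint ``slices'' $\{x^k_1\}\times X_2$ of $X$. Necessity follows the same logic in all three parts: a violation of the axiom yields, through the rationalizing conditions, a configuration of observed choices $\mathbf{x}^k$ (or, in the naive model, of the planned choices $\mathbf{p}^k$, the unique element of $\max(B^k,\succsim_1)$, which by disjointness of slices are pairwise distinct) that is incompatible with $\succsim_1$ being a weak order — a $\succ_1$-cycle in the cases of Condition~2 and NN-SARP, and a finite family of alternatives each of which is strictly $\succ_1$-dominated within the family in the case of Condition~1. (For Condition~2 one additionally uses that \eqref{Eq5} forces $\mathbf{x}^k\in\max(B^k(x^k_1),\succsim_2)$, so that $B^{k_{l+1}}(x^{k_l}_1)\subseteq B^{k_l}(x^{k_l}_1)$ transfers $\succsim_2$-maximality of $\mathbf{x}^{k_l}$ to the smaller section and places $\mathbf{x}^{k_l}$ in Agent~1's reachable set at observation $k_{l+1}$.) In each sufficiency direction one first fixes $\succsim_2$: since the $B^k(x^k_1)$ lie in disjoint slices, the relation ``$\mathbf{x}^k$ strictly above everything else in $B^k(x^k_1)$'' has no nontrivial chains and extends to a linear order $\succsim_2$ on $X$ with $\max(B^k(x^k_1),\succsim_2)=\mathbf{x}^k$ for all $k$. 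This already settles part~(3): under NN-SARP the relation generated by the pairs $(\mathbf{x}^k,\mathbf{y})$ with $\mathbf{y}\in B^k(\cdot,x^k_2)$ and $\mathbf{y}\neq\mathbf{x}^k$ has no cycle (a cycle is precisely an NN-SARP violation), hence extends to a linear order $\succsim_1$ with $\max(B^k(\cdot,x^k_2),\succsim_1)=\mathbf{x}^k$, and $(\succsim_1,\succsim_2)$ yields \eqref{EqSNN1}--\eqref{EqSNN2}.

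For part~(1) the work is in constructing $\succsim_1$ from Condition~1, by peeling. Applying Condition~1 with $S=K$ gives a point $\mathbf{q}$ in some $B^{k_0}(x^{k_0}_1)$ that lies in no $B^k\setminus B^k(x^k_1)$; since $\mathbf{q}$ has first coordinate $x^{k_0}_1\neq x^k_1$ for $k\neq k_0$, membership of $\mathbf{q}$ in any other budget $B^k$ would place it in $B^k\setminus B^k(x^k_1)$, which is excluded — so $\mathbf{q}\in B^k$ only for $k=k_0$. I set $\mathbf{p}^{k_0}:=\mathbf{q}$, delete $k_0$ from $K$, and iterate, using that Condition~1 is inherited by subsets of $K$. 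This produces a removal order on $K$ and planned choices $\mathbf{p}^k\in B^k(x^k_1)$ such that, whenever $\mathbf{p}^{k'}\in B^k$ with $k\neq k'$, the index $k$ was removed strictly before $k'$. Letting $u_1(\mathbf{p}^k)$ decrease along the removal order, with $u_1\equiv 0$ off the planned choices, makes $\mathbf{p}^k$ the unique $\succsim_1$-maximizer of $B^k$; together with the $\succsim_2$ above, \eqref{EqN1}--\eqref{EqN2} hold.

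For part~(2) I would build $\succsim_1$ together with a sharper $\succsim_2$ from Condition~2. Write $k\mathrel{R}j$ for ``$\mathbf{x}^j\in B^k(x^j_1)\subseteq B^j(x^j_1)$ and $\mathbf{x}^j\neq\mathbf{x}^k$''; a cycle of $R$ is exactly a Condition~2 violation, so $R$ is acyclic. Pick the linear order $\succsim_2$ so that, inside each slice $\{x^j_1\}\times X_2$, the observed choice $\mathbf{x}^j$ sits immediately above $B^j(x^j_1)\setminus\{\mathbf{x}^j\}$ and below everything else in that slice (slices containing no observed first-period choice are ordered arbitrarily). Then, for $k\neq j$, one checks that $\max(B^k(x^j_1),\succsim_2)=\mathbf{x}^j$ if and only if $k\mathrel{R}j$; hence the only observed choices that Agent~2's responses make reachable at observation $k$ are $\mathbf{x}^k$ and those $\mathbf{x}^j$ with $k\mathrel{R}j$. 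Since $R$ is acyclic there is a function $u_1$ on the observed choices, with values in $\{1,\dots,|K|\}$, such that $k\mathrel{R}j$ implies $u_1(\mathbf{x}^k)>u_1(\mathbf{x}^j)$; extending $u_1$ by $0$ off the observed choices then makes $\mathbf{x}^k$ the unique $\succsim_1$-maximizer of $\bigcup_{x_1\in B^k_1}\max(B^k(x_1),\succsim_2)$, so \eqref{Eq5} holds. (Sections with no $\succsim_2$-maximum simply do not contribute to Agent~1's reachable set.)

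The step I expect to be the main obstacle is the construction of $\succsim_2$ in part~(2): one must commit, across all observations simultaneously, to which of Agent~2's off-path responses to promote — so that no $\mathbf{x}^j$ becomes reachable at an observation $k$ with $u_1(\mathbf{x}^j)\geq u_1(\mathbf{x}^k)$ — while still keeping $\mathbf{x}^k$ a $\succsim_2$-maximum of $B^k(x^k_1)$ for every $k$. What makes this manageable, and what the hypothesis $x^k_1\neq x^s_1$ delivers, is that all of these competing requirements are confined to a single slice $\{x^j_1\}\times X_2$ and can be met slice by slice; a secondary technical point — that ``$\max$'' must be a genuine singleton maximum on the sections that matter — is handled because the constructed $\succsim_2$ dominates each such section by one designated element.
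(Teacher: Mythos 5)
Your proposal is correct and follows essentially the same route as the paper: the same revealed relation $\mathbf{x}^k \mathrel{R} \mathbf{x}^s$ (via $\mathbf{x}^s\in B^k(x^s_1)\subseteq B^s(x^s_1)$) and the same three-tier construction of $\succsim_2$ within each slice for part (2), and the same peeling argument on Condition~1 and standard acyclic-extension argument for NN-SARP in parts (1) and (3). The only cosmetic difference is that you re-derive parts (1) and (3) directly using the disjoint-slices assumption, whereas the paper obtains them by invoking its Theorem~\ref{thm:N} together with the observation that N-SARP is vacuous when all first-period choices are distinct.
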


When the assumption in Theorem~\ref{thm:sophdistinct} is not satisfied, we can still characterize the naive and naively-Nash models.

\begin{thm}\label{thm:N} Let $\D$ be a dataset. 
\begin{enumerate}
    \item $\D$ is naively rationalizable iff it satisfies N-SARP and Condition 1.  
    \item $\D$ is naively Nash rationalizable iff it satisfies N-SARP. 
    \item  $\D$ is strictly naively Nash rationalizable iff it satisfies N-SARP and NN-SARP.  
\end{enumerate}
\end{thm}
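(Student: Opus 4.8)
The plan is to treat the two agents' requirements separately, since in each of the three notions the conditions on $\succsim_1$ and on $\succsim_2$ are decoupled. The necessity directions are routine: in each case the relevant requirement on a player says that their observed outcome is the \emph{unique} maximizer of their weak order over the appropriate feasible set, so any cycle in the corresponding direct revealed-preference relation would, by transitivity, force a strict preference of some outcome over itself. Concretely, Agent 2's condition $\max(B^k(x^k_1),\succsim_2)=\mathbf{x}^k$ yields N-SARP in all three parts; Agent 1's condition $\max(B^k(\cdot,x^k_2),\succsim_1)=\mathbf{x}^k$ yields NN-SARP in the strict Nash case; and for the naive model Agent 1's condition that $\max(B^k,\succsim_1)$ be a single point lying in $B^k(x^k_1)$ yields Condition~1. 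For the last of these I would argue by contradiction: if Condition~1 failed for some nonempty $S$, let $w^k=\max(B^k,\succsim_1)\in B^k(x^k_1)$ be the (unobserved) planned point, pick $w^{k^*}$ that is $\succsim_1$-maximal among $\{w^k:k\in S\}$, and note that $w^{k^*}\in B^{k^*}(x^{k^*}_1)\subseteq\bigcup_{k\in S}(B^k\setminus B^k(x^k_1))$, so $w^{k^*}\in B^j$ but $w^{k^*}\notin B^j(x^j_1)$ for some $j\in S$; then $w^{k^*}\neq w^j$ (as $w^j\in B^j(x^j_1)$), so $w^j\succ_1 w^{k^*}$, contradicting maximality.

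For sufficiency I would first isolate a lemma producing weak orders from acyclicity: given an acyclic direct revealed-preference relation on the finite set of observed outcomes $Y=\{\mathbf{x}^k\}$, its transitive closure is a strict partial order, which by Szpilrajn extends to a strict linear order on $Y$; placing every point of $X\setminus Y$ strictly below all of $Y$ (and mutually indifferent) gives a weak order on $X$ under which each $\mathbf{x}^k$ is the unique maximizer of the feasible set that generated its revealed-preference relation. Applying the lemma with the relation ``$\mathbf{x}^k$ chosen over $\mathbf{y}\in B^k(x^k_1)$'' and N-SARP produces $\succsim_2$ with $\max(B^k(x^k_1),\succsim_2)=\mathbf{x}^k$; applying it with ``$\mathbf{x}^k$ chosen over $\mathbf{y}\in B^k(\cdot,x^k_2)$'' and NN-SARP produces $\succsim_1$ with $\max(B^k(\cdot,x^k_2),\succsim_1)=\mathbf{x}^k$. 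Part (3) follows immediately from these two; part (2) follows from the same $\succsim_2$ together with the totally indifferent $\succsim_1$, for which $\mathbf{x}^k$ is (trivially) a maximizer of $B^k(\cdot,x^k_2)$; and the $\succsim_2$ needed in part (1) is again supplied by the lemma via N-SARP.

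The substantive step, which I expect to be the main obstacle, is constructing $\succsim_1$ for the naive model in part (1) from Condition~1, because here one must simultaneously invent the unobserved planned second coordinates $w^k\in B^k(x^k_1)$ and order $X$ so that $w^k$ is the strict $\succsim_1$-maximum of the \emph{whole} budget $B^k$, not just of a section. I would do this by induction on $|K|$, using that Condition~1 is hereditary (if it holds for $K$ it holds for every sub-dataset indexed by a subset of $K$, since the quantifier in Condition~1 already ranges over all subsets). When $K\neq\emptyset$, apply Condition~1 with $S=K$ to obtain a point $p\in\bigcup_{k\in K}B^k(x^k_1)$ with $p\notin\bigcup_{k\in K}(B^k\setminus B^k(x^k_1))$; letting $T=\{k\in K:p\in B^k\}$, this is nonempty (as $p\in B^{k_0}(x^{k_0}_1)$ for some $k_0$), and the defining property of $p$ forces $p\in B^k(x^k_1)$ for every $k\in T$, so we may legitimately set $w^k=p$ for all $k\in T$ and declare $p$ strictly above everything in $\succsim_1$. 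By the inductive hypothesis for $K\setminus T$ there is a weak order $\succsim_1'$ realizing the naive requirement for the remaining observations; splice it strictly below $p$ (keeping the relative $\succsim_1'$-order of all points $\neq p$). For $k\in T$, $p$ is then the unique $\succsim_1$-maximizer of $B^k$ and $p\in B^k(x^k_1)$; for $k\notin T$ we have $p\notin B^k$, so $\succsim_1$ and $\succsim_1'$ agree on $B^k$ and the maximizer is unchanged. Combined with the $\succsim_2$ from the lemma, this establishes naive rationalizability and completes part (1).
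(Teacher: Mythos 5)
Your proposal is correct and follows essentially the same route as the paper: necessity via the revealed-preference relations induced by each agent's (unique-)maximization requirement, sufficiency by extending the acyclic relation from N-SARP (resp.\ NN-SARP) to get $\succsim_2$ (resp.\ $\succsim_1$), total indifference for the non-strict Nash part, and a greedy/inductive selection of ``planned'' points guaranteed by Condition~1 ranked lexicographically on top to build $\succsim_1$ in the naive case. Your inductive batch-removal construction is just a repackaging of the paper's one-observation-at-a-time enumeration of the points $\mathbf{y}^1,\ldots,\mathbf{y}^K$, so there is no substantive difference.
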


\subsection{Example on NN-SARP}

We end with an example exhibiting a violation of NN-SARP. Again the choice and budget sets are as in the previous examples. Consider an example where $B^1=X\setminus \{(x'_1, x''_2)\}$ and $B^2=X\setminus \{(x_1, x''_2)\}$. 

Consider the choices in Figure~\ref{fig:NNsarp}, $\mathbf{x}^1=(x_1, x_2)$ and $\mathbf{x}^2=(x'_1, x_2)$. In this example, NN-SARP is violated (consequently Condition 2 is also violated) since $\mathbf{x}^1\in B^2(\cdot, x^2_2)$ and $\mathbf{x}^2\in B^1(\cdot, x^1_2)$, but Condition 1 and N-SARP are satisfied.





\begin{figure}
\begin{tikzpicture}[x=0.5cm, y=0.7cm][domain=0:1, range=0:1, scale=3/4, thick]
\usetikzlibrary{intersections}

\coordinate[label=above:${P1}$] (wf) at (7.5, 10.2);
\coordinate[label=above:${P2}$] (wf) at (2.5, 7);
\coordinate[label=above:${P2}$] (wf) at (10.6, 7);

\filldraw[] (7.5, 10) circle (3pt);
\filldraw[] (10, 7) circle (3pt);
\filldraw[] (3, 7) circle (3pt);
\filldraw[] (0, 4) circle (3pt);
\filldraw[] (3, 4) circle (3pt);
\filldraw[] (5, 4) circle (3pt);
\filldraw[] (13, 4) circle (3pt);
\filldraw[] (8.5, 4) circle (3pt);

\coordinate[label=left:${x_1}$] (wf) at (5, 8.5);
\coordinate[label=left:${x'_1}$] (wf) at (10.2, 8.5);

\coordinate[label=left:${x_2}$] (wf) at (1.5, 5.5);
\coordinate[label=right:${x'_2}$] (wf) at (1.8, 5.5);
\coordinate[label=right:${x''_2}$] (wf) at (4, 5.5);
\coordinate[label=left:${x_2}$] (wf) at (9.3, 5.5);
\coordinate[label=right:${x'_2}$] (wf) at (11.5, 5.5);

\draw[blue][ultra thick](7.5, 10)--(3, 7);
\draw[->](7.5, 10)--(10, 7);
\draw[blue][ultra thick](3, 7)--(0, 4);
\draw[->](3, 7)--(3, 4);
\draw[->](3, 7)--(5, 4);
\draw[->](10, 7)--(13, 4);
\draw[->](10, 7)--(8.5, 4);


\coordinate[label=below:${x_1x_2}$] (wf) at (0, 4);

\coordinate[label=below:${x_1x'_2}$] (wf) at (3, 4);

\coordinate[label=below:${x_1x''_2}$] (wf) at (5.2, 4);

\coordinate[label=below:${x'_1x'_2}$] (wf) at (13, 4);


\coordinate[label=below:${x'_1x_2}$] (wf) at (8.5, 4);

\coordinate[label=above:${P1}$] (wf) at (25.5, 10.2);
\coordinate[label=above:${P2}$] (wf) at (20.9, 7);
\coordinate[label=above:${P2}$] (wf) at (28.6, 7);

\filldraw[] (25.5, 10) circle (3pt);
\filldraw[] (28, 7) circle (3pt);
\filldraw[] (21.5, 7) circle (3pt);
\filldraw[] (18, 4) circle (3pt);
\filldraw[] (22, 4) circle (3pt);
\filldraw[] (31, 4) circle (3pt);
\filldraw[] (28, 4) circle (3pt);
\filldraw[] (25, 4) circle (3pt);

\coordinate[label=left:${x_1}$] (wf) at (23.5, 8.5);
\coordinate[label=left:${x'_1}$] (wf) at (28.2, 8.5);
\coordinate[label=left:${x_2}$] (wf) at (19.8, 5.5);
\coordinate[label=right:${x'_2}$] (wf) at (21.8, 5.5);
\coordinate[label=left:${x_2}$] (wf) at (26.5, 5.5);
\coordinate[label=right:${x'_2}$] (wf) at (26.8, 5.5);
\coordinate[label=right:${x''_2}$] (wf) at (29.5, 5.5);

\draw (25.5, 10)--(21.5, 7);
\draw[blue][ultra thick](25.5, 10)--(28, 7);
\draw[->](21.5, 7)--(18, 4);
\draw (21.5, 7)--(22, 4);
\draw[->](28, 7)--(31, 4);
\draw[->](28, 7)--(28, 4);
\draw[blue][ultra thick](28, 7)--(25, 4);


\coordinate[label=below:${x_1x_2}$] (wf) at (18, 4);

\coordinate[label=below:${x_1x'_2}$] (wf) at (22, 4);

\coordinate[label=below:${x'_1x'_2}$] (wf) at (28, 4);
\coordinate[label=below:${x'_1x''_2}$] (wf) at (31, 4);


\coordinate[label=below:${x'_1x_2}$] (wf) at (25, 4);

\end{tikzpicture}

\caption{A Violation of NN-SARP (Consequently, Condition 2)}\label{fig:NNsarp}
\end{figure}
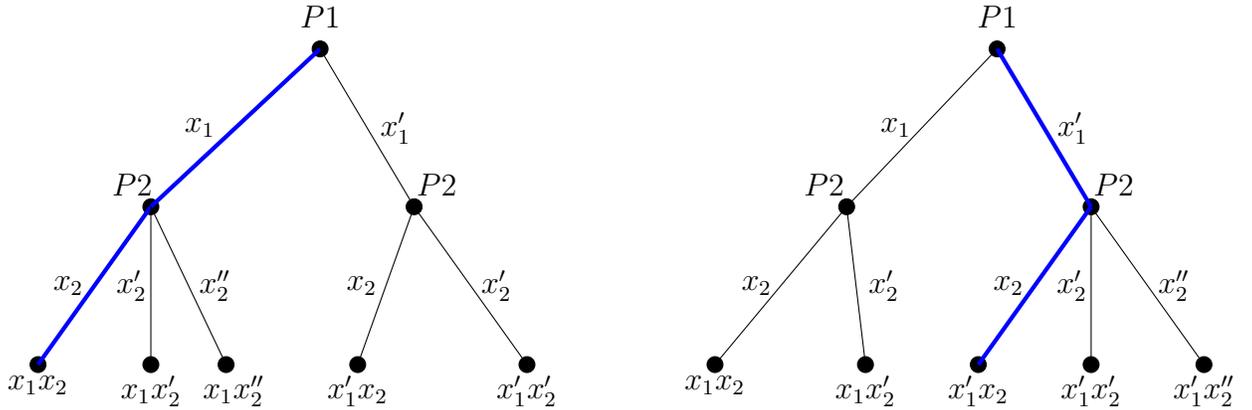

\section{Sophisticated and naive rationalizations with more than two periods}\label{sec:morethan2}

We now extend our analysis of sophisticated and naive rationalizations for choice problems with
more than two periods. Let $T\ge 2$ be the number of periods and $X=\prod^T_{t=1}
X_t$ be the set of all possible alternatives. First, Agent 1 chooses $x_1\in X_1$ in
period $1$. Then, for any $t\le T$, after observing choices $\mathbf{x}_{t-1}$, Agent
$t$ chooses $x_t\in X_t$ in period $t$. The outcome is the vector $\mathbf{x}=(x_1,x_2, \ldots, x_T)\in X$; and agents have potentially different preferences $\{\succsim_t\}_{t\in T}$ over $X$.

For example, to accommodate the quasi-hyperbolic discounting model, we can have $X_t=\Re_+$ for each $t<T$ and $X_T=\Re^2_+$. In this case, Agent $t<T$ chooses period-t consumption $x_t$ and has preferences $\succsim_t$ over $X$ represented by $u(x_t)+\sum^{T-t-1}_{s=1}\beta\da^{s} u(x_{t+s})+\beta \delta^{T-t} u(y_T)+\beta \delta^{T-t+1} u(y_{T+1})$, while agent $T$ chooses consumption in periods $T$ and $T+1$, $x_T=(y_T, y_{T+1})\in \Re^2_+$, with preferences over $X$ that are represented by $u(y_T)+\beta\delta u(y_{T+1})$. It is immediate that when $\beta<1$, preferences $\succsim_t$ are different.

\medskip
\noindent\textbf{Notations.} We write $\mathbf{x}_t=(x_1, \ldots, x_t)$ and
$\mathbf{x}_{-t}=(x_1, \ldots, x_{t-1}, x_{t+1}, \ldots, x_T)$ for a given
$\mathbf{x}=(x_1, \ldots, x_T)$ and $t\in T$. Given a budget set $B\subseteq X$, we
also write \[B(\mathbf{x}_{t-1})=\{\mathbf{x}\in X|\exists (x_{t}, \ldots, x_T)\text{ s.t. }\mathbf{x}\in B\},\] 
\[B_t(\mathbf{x}_{t-1})=\{x_t\in X_t|\exists (x_{t+1}, \ldots, x_T)\text{ s.t. }\mathbf{x}\in B\},\]
and 
\[B_{-t}(\mathbf{x}_{-t})=\{\mathbf{x}\in X|\exists x_{t}\in X_t\text{ s.t. }\mathbf{x}\in B\}.\]
For simplicity, we abuse notation and denote $B_{-t}(\mathbf{x}_{-t})$ as $B(\mathbf{x}_{-t})$.

A \df{dataset} is a collection $\mathcal{D}=\{\mathbf{x}^k, B^k\}_{k\in K}$ where
$B^k\subseteq X$ is a budget and $\mathbf{x}^k\in B^k$. Each dataset is comprised of a finite collection of observations $(\exb^k,B^k)$, where $B^k$ is a budget and $\exb^k=(x^k_t)_{t\in T}\in B^k$. The interpretation is that, when faced with budget $B^k$, Agent $t$ chose $x^k_t\in B^k_t(\mathbf{x}^k_{t-1})$.

A data set $\mathcal{D}$ is \df{naively rationalizable} if there exist preferences $\{\succsim_t\}_{t\in T}$ on $X$ such that for each observation $k\in K$,

\begin{equation}\label{Eq30}
\max\big(B^k(\mathbf{x}^k_{t-1}), \succsim_{t}\big)\in B^k(\mathbf{x}^k_{t})
\end{equation}
for each $t<T$ and 
\begin{equation}\label{Eq31}
\max(B^k(\mathbf{x}^k_{T-1}), \succsim_T)=\mathbf{x}^k.
\end{equation}
Note that Equation~\eqref{Eq30} says that Agent $t$'s optimal choice in
$B^k(\mathbf{x}^k_{t-1})$ coincide with $x^k_t$ in period $t$.

A data set $\mathcal{D}$ is \df{naively Nash rationalizable} if there exist preferences $\{\succsim_t\}_{t\in T}$ on $X$ such that for each observation $k\in K$,\begin{equation}\label{Eq36}
\mathbf{x}^k\in \max\big(B^k(\mathbf{x}^k_{-t}), \succsim_t\big)\end{equation}
for each $t<T$ and
\begin{equation}\label{Eq37}
\max\big(B^k(\mathbf{x}^k_{T-1}), \succsim_T\big)=\mathbf{x}^k.
\end{equation}

One may desire a stricter discipline on the rationalization, and impose that the observed choice by Agent $t$ is their unique optimal action. This leads to the next notion of rationalization.

A data set $\mathcal{D}$ is \df{strictly naively Nash rationalizable} if there exist preferences $\{\succsim_t\}_{t\in T}$ such that for each observation $k\in K$ and $t\le T$,\begin{equation}\label{Eq38}
\max\big(B^k(\mathbf{x}^k_{-t}), \succsim_t\big)=\mathbf{x}^k.\end{equation}

A data set $\mathcal{D}$ is \df{sophisticated rationalizable} if there exist preferences $\{\succsim_t\}_{t\in T}$ on $X$ such that for each observation $k\in K$,
\begin{equation}\label{Eq26}
\max\Big(M^k_1, \succsim_1\Big)=\mathbf{x}^k,\end{equation}
where $M^k_1, \ldots, M^k_T$ are defined recursively as follows: 
\begin{equation}\label{Eq27}
M^k_t(\mathbf{x}_{t-1})=\bigcup_{x_{t}\in B^k_{t}(\mathbf{x}_{t-1})}\max\big(M^k_{t+1}(\mathbf{x}_{t}), \succsim_{t+1}\big)
\end{equation}
for each $t<T$ and $M^k_T(\mathbf{x}_{T-1})=B^k(\mathbf{x}_{T-1})$.

\subsection{Axioms and Characterization.} To characterize the naive model in this environment, we provide appropriate modifications of N-SARP and Condition 1. For each $t\in T$, we define the following.

\begin{namedaxiom}[$t$-SARP]
    There is no sequence $k_1, \ldots, k_L$ of $K$ such that for each $l\le L$, $\mathbf{x}^{k_l}\neq \mathbf{x}^{k_{l+1}}$ and 
\begin{equation}\label{Eq32}
\mathbf{x}^{k_l}\in B^{k_{l+1}}(\mathbf{x}^{k_{l+1}}_{-t}).\end{equation}
\end{namedaxiom}

Note that N-SARP is equivalent to $2-$SARP and NN-SARP is equivalent to $1$-SARP when $T=2$.
\medskip

\begin{namedaxiom}[Condition 5] For any subset $S$ of $K$ and $t\le T$,
\begin{equation}\label{Eq33}
\bigcup_{k\in S}B^{k}(\mathbf{x}^k_{t})\not\subseteq \bigcup_{k\in S}B^{k}(\mathbf{x}^k_{t-1})\setminus B^k(\mathbf{x}^k_t).\end{equation}
\end{namedaxiom}

The behavioral implications of N-SARP and Condition 1 remain the same. Given our definitions of $B^k(x_1)$, Condition 2 is well-defined and has the same behavioral implications in this general environment as we show below.\medskip

Again, our strongest result requires the assumption that no two observed first-period choices are the same. For data that satisfy the assumptions, we can precisely characterize the naive and the sophisticated model.

\begin{theorem}\label{thm:sopgen} Let $\mathcal{D}$ be a dataset in which  $x^k_1\neq x^s_1$ for any two distinct $k, s\in K$. 
\begin{enumerate}
\item $\D$ is sophisticated rationalizable iff it satisfies Condition 2.
\item $\D$ is naively rationalizable iff it satisfies Condition 5.
\item $\D$ is strictly naively Nash rationalizable iff it satisfies $1$-SARP.
\end{enumerate}
\end{theorem}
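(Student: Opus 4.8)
The plan is to prove Theorem~\ref{thm:sopgen} by reducing each claim to its two-period analogue (Theorem~\ref{thm:sophdistinct}), since the statement of Theorem~\ref{thm:sopgen} is literally the $T\ge 2$ extension of parts (2), (1), (3) of Theorem~\ref{thm:sophdistinct}, and the definitions of sophisticated/naive/strictly-naively-Nash rationalizability and of Condition 2, Condition 5, and $1$-SARP are all phrased so that they specialize correctly at $T=2$. Throughout we exploit the standing assumption $x^k_1\ne x^s_1$ for $k\ne s$, which (as Remark~1 and the discussion before it point out) makes the acyclicity conditions on later agents vacuous and lets the first-period choice serve as a label for each observation.

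First I would handle part (1), sophisticated rationalizability. The key observation is that because first-period choices are distinct across observations, the recursively-defined sets $M^k_1$ in~\eqref{Eq26}--\eqref{Eq27} can be analyzed ``one observation at a time'': the only part of $M^k_1$ that matters for checking $\max(M^k_1,\succsim_1)=\exb^k$ is what Agent~1 can secure by deviating to $x^s_1$ for some other observation $s$, together with the on-path point $\exb^k$ itself. I would show that, given distinctness, one can construct $\succsim_2,\ldots,\succsim_T$ so that the downstream best responses collapse exactly the way the two-period argument needs: for each $k$, the ``reduced'' feasible set that Agent~1 effectively faces is $\{\exb^k\}\cup\{$ the forced outcome $\exb^s$ reachable whenever $B^k(\exb^s_{\le \text{something}})$ sits inside $B^s(\cdot)\}$, and the relevant containment that drives a revealed preference of $x^s_1$ over $x^k_1$ is precisely $\exb^{k}\in B^{s}(x^{k}_1)\subseteq B^{k}(x^{k}_1)$, i.e.\ the hypothesis appearing in Condition~2. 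So necessity of Condition~2 follows by extracting exactly this revealed-preference relation from any sophisticated rationalization and noting a cycle in it contradicts strict maximality of $\exb^k$ in $M^k_1$; and sufficiency follows by taking any completion of the acyclic revealed-preference relation on $\{x^k_1:k\in K\}$ to a weak order $\succsim_1$, pairing it with indifferent-everywhere (or suitably lexicographic) lower-agent preferences, and checking~\eqref{Eq26} holds. In effect this is the $T=2$ proof with the phrase ``Agent 2's best response'' replaced by ``the composite best response of Agents $2,\dots,T$,'' which the distinctness hypothesis renders well-behaved.

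Next, part (2): naive rationalizability characterized by Condition~5. Here the logic is the one already spelled out in the text preceding Condition~1, generalized period-by-period. For each $t<T$, Agent~$t$'s planned continuation from $B^k(\exb^k_{t-1})$ must yield an outcome in $B^k(\exb^k_t)$, and a set-covering argument analogous to the $S\subseteq K$ argument for Condition~1 shows that the family of sets $\{B^k(\exb^k_t)\}_{k}$ cannot be collectively dominated by $\{B^k(\exb^k_{t-1})\setminus B^k(\exb^k_t)\}_k$: this is exactly Condition~5. I would prove necessity by supposing the covering in~\eqref{Eq33} holds for some $S$ and $t$, and deriving a contradiction from the fact that each Agent~$t$'s optimum in $B^k(\exb^k_{t-1})$ lies in $B^k(\exb^k_t)$ yet is dominated (within its own budget) by something feasible-but-off-path somewhere in $S$ — a Hall/covering-type argument producing a point with no possible optimal status. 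For sufficiency, Condition~5 (one instance per $t$) lets me build each $\succsim_t$ separately: declare $\exb^k_t$-consistent outcomes best, using the covering failure to choose, for each $k$, a utility level that ranks $\exb^k$'s continuation strictly above every off-path alternative in $B^k(\exb^k_{t-1})$; distinctness of $x^k_1$ keeps the $k$'s from interfering at $t=2,\dots$ and an induction on $t$ assembles the full profile. Part (3) is the easiest: strictly naively Nash rationalizability asks $\max(B^k(\exb^k_{-t}),\succsim_t)=\exb^k$ for every $t$, for $t=1$ this is exactly the no-cycle condition $1$-SARP on the relation ``$\exb^{k}\in B^{k'}(\cdot,x^{k'}_1)$ implies $x^{k'}_1\succsim_1$-preferred'', and for $t\ge 2$ distinctness makes the analogous conditions vacuous (the sections $B^k(\exb^k_{-t})$ for different $k$ do not overlap in the relevant coordinate), so only $1$-SARP bites; necessity is immediate and sufficiency is the standard ``complete an acyclic relation to a weak order'' construction done coordinate by coordinate.

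The main obstacle I anticipate is the sufficiency direction of part (1): verifying that the preferences constructed from the acyclic Condition-2 relation actually reproduce the data through the nested maximizations~\eqref{Eq27}, because one must ensure that the best responses of Agents $2,\dots,T$ along \emph{off-path} histories do not accidentally hand Agent~1 an option she strictly prefers to $\exb^k$. This is where distinctness of first-period choices is doing real work, and I would need a careful inductive claim — something like: for each $k$ and each $t$, the set $M^k_t(\exb^k_{t-1})$ contains $\exb^k$, and every other element of it is either the forced outcome of some other observation $s$ (with $x^s_1$ not $\succsim_1$-above $x^k_1$) or is $\succsim_1$-dominated by $\exb^k$. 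Getting the off-path continuations to be degenerate enough (e.g.\ by making lower agents break ties toward a fixed ``punishment'' outcome, or indifferent everywhere so that $\max$ returns the whole set and then leaning entirely on $\succsim_1$) is the delicate bookkeeping step; everything else reduces cleanly to the two-period arguments already established.
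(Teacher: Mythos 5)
Your overall architecture matches the paper's: part (3) is the standard observation that Equation~\eqref{Eq38} is characterized by $t$-SARP, with $t$-SARP vacuous for $t\ge 2$ when first-period choices are distinct; part (2) amounts to the $T$-period naive characterization (the paper obtains it from Theorem~\ref{thm:naigen}.1, whose sufficiency proof is essentially the covering construction you sketch, plus vacuity of $T$-SARP under distinctness); and part (1) is a reduction of the sophisticated model to the two-period Theorem~\ref{thm:sophdistinct}. Your necessity argument for Condition~2 is also in the right direction, though you should note that one must actually verify that $\mathbf{x}^s$ survives the nested maximizations inside budget $B^k$: the containment $B^k(x^s_1)\subseteq B^s(x^s_1)$ propagates to $B^k(\mathbf{x}^s_t)\subseteq B^s(\mathbf{x}^s_t)$ for all $t<T$, so $\mathbf{x}^s\in M^s_2(x^s_1)$ implies $\mathbf{x}^s\in M^k_2(x^s_1)$, hence $\mathbf{x}^s\in M^k_1$ and strict maximality of $\mathbf{x}^k$ gives $\mathbf{x}^k\succ_1\mathbf{x}^s$.

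The genuine gap is exactly the step you label ``delicate bookkeeping'': the sufficiency construction in part (1). Your concrete proposal of indifferent-everywhere lower-agent preferences fails: with total indifference, $\max\big(M^k_{t+1}(\mathbf{x}_t),\succsim_{t+1}\big)=M^k_{t+1}(\mathbf{x}_t)$, so $M^k_1=B^k$ and Equation~\eqref{Eq26} would require $\mathbf{x}^k$ to be the unique $\succsim_1$-maximum of the \emph{entire} budget $B^k$; that needs acyclicity of the relation ``$\mathbf{x}^s\in B^k$, $\mathbf{x}^s\neq\mathbf{x}^k$'', which is strictly stronger than Condition~2. For instance, with $X_1=\{a,b\}$, $X_2=\{c,d,e\}$, $B^1=\{(a,c),(a,d),(b,c)\}$, $B^2=\{(a,c),(a,e),(b,c),(b,d)\}$, $\mathbf{x}^1=(a,c)$, $\mathbf{x}^2=(b,c)$, Condition~2 holds (the data are rationalizable), yet $\mathbf{x}^1\in B^2$ and $\mathbf{x}^2\in B^1$, so full indifference below period~1 yields a contradiction; any rationalization must make the continuation agent choose $(a,e)$ rather than $(a,c)$ after the deviation to $a$ in $B^2$. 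This is precisely what the unspecified ``punishment'' device has to deliver, and it is where the content of the proof lies. The paper resolves it by (i) treating $\hat X_2=\prod_{t\ge 2}X_t$ as a composite second period and invoking Theorem~\ref{thm:sophdistinct}, whose constructed $\succsim_2$ makes $\mathbf{x}^k$ strictly best in $B^k(x^k_1)$ while ranking every point of $(\{x^k_1\}\times \hat X_2)\setminus B^k(x^k_1)$ strictly above $\mathbf{x}^k$ (this is what blocks spurious comparisons as in the example, and it uses distinctness of the $x^k_1$ to be well defined), and (ii) setting $\succsim_t=\succsim_2$ for all $t\ge 3$ and using path independence of rational choice, $\max(A\cup B,\succsim)=\max\big(\max(A,\succsim)\cup\max(B,\succsim),\succsim\big)$, to show $\max\big(M^k_2(x_1),\succsim_2\big)=\max\big(B^k(x_1),\succsim_2\big)$, so the recursion~\eqref{Eq27} collapses to the two-period maximization and~\eqref{Eq26} follows. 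Until you supply these two ingredients (or an equivalent verified construction of the later agents' preferences together with your inductive claim about $M^k_t$), the sufficiency half of part (1) is not established.
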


\begin{remark} The sufficiency proof of Statement~(1) of Theorem~\ref{thm:sopgen}
  essentially reduces to the two-period problem treated in the body of the
  paper. When we construct a rationalizing preference, we rely heavily on the
  two-period result from Section~\ref{sec:characterizations}, and we argue that we
  may set the preferences of the period $t$ agent to equal those of the second period
  agent, for all $t>2$. In a sense, the only tension that has real empirical
  consequences is between the first and subsequent agents.
  \end{remark}

When the assumption in Theorem~\ref{thm:sopgen} is not satisfied, we can still characterize the naive and naively-Nash models.

\begin{theorem}\label{thm:naigen} Let $\mathcal{D}$ be a dataset.
\begin{enumerate}
\item $\D$ is naively rationalizable iff it satisfies $T$-SARP and Condition 5.
    \item $\D$ is naively Nash rationalizable iff it satisfies $T$-SARP.
    \item $\D$ is strictly naively Nash rationalizable iff it satisfies $t$-SARP for each $t\in T$.
\end{enumerate}
\end{theorem}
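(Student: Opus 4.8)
The plan is to prove Theorem~\ref{thm:naigen} by reducing each statement to the two-period machinery of Theorem~\ref{thm:N}, extended period-by-period. The essential observation is that in the many-period naive and naively-Nash models, each agent $t$'s optimality condition is structurally the same as the period-2 agent's condition in the two-period model, because agent $t$ simply maximizes $\succsim_t$ over the ``tail'' set $B^k(\mathbf{x}^k_{t-1})$ (for naive) or over $B^k(\mathbf{x}^k_{-t})$ (for naively-Nash), with the earlier coordinates held fixed by the observed data. So I would set up, for each fixed $t$, an auxiliary two-period-style problem in which the ``first period'' collects coordinates $1,\dots,t-1$ and the ``second period'' is coordinate $t$ (with coordinates $t+1,\dots,T$ folded in appropriately), and invoke the relevant piece of Theorem~\ref{thm:N}.

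Concretely, for statement (2) (naively Nash rationalizable iff $T$-SARP): necessity is the routine revealed-preference direction — if $\mathbf{x}^{k_l}\in B^{k_{l+1}}(\mathbf{x}^{k_{l+1}}_{-t})$ then $\mathbf{x}^{k_{l+1}}$ is $\succsim_t$-revealed-preferred to $\mathbf{x}^{k_l}$, so a cycle with $\mathbf{x}^{k_l}\ne\mathbf{x}^{k_{l+1}}$ contradicts the antisymmetry coming from the $\max$ in~\eqref{Eq36}--\eqref{Eq37}. For sufficiency, I would build each $\succsim_t$ separately: $T$-SARP says the binary relation ``$\mathbf{x}^{k_{l+1}}$ chosen while $\mathbf{x}^{k_l}$ was feasible in the $t$-section'' is acyclic across \emph{all} $t$ simultaneously (one common cycle condition), which is actually stronger than acyclicity of each individual relation — so each individual relation extends to a weak order $\succsim_t$ with the observed choices maximal in the relevant sections; then verify~\eqref{Eq36}--\eqref{Eq37} hold. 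Statement (3) adds uniqueness of each agent's choice; the only change is that one needs each $\succsim_t$ to make $\mathbf{x}^k$ the \emph{unique} maximizer of $B^k(\mathbf{x}^k_{-t})$, which the Szpilrajn-style extension delivers as long as we never have $\mathbf{x}^{k}\in B^{k'}(\mathbf{x}^{k'}_{-t})$ with $\mathbf{x}^k\ne\mathbf{x}^{k'}$ and also the reverse with strict feasibility — exactly what $t$-SARP for each $t$ rules out; here we need $t$-SARP for every $t$ (not just $T$), because uniqueness is demanded of every agent. Statement (1) combines (3)'s structure for the tail agents with Condition~5, which plays the role Condition~1 played in the two-period case: Condition~5 is what makes the \emph{unobserved} planned choices consistently constructible, i.e., it guarantees we can pick, for each $k$ and each $t<T$, a planned continuation in $B^k(\mathbf{x}^k_{t-1})\setminus B^k(\mathbf{x}^k_t)$ without creating a contradictory chain of ``planned-preferred'' comparisons; one runs a Hall-type / maximal-element argument on the family of sets appearing in~\eqref{Eq33} to select these phantom plans, then builds $\succsim_t$ so that each plan is optimal and each realized $x^k_t$ is the best response of agent $t$.

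The reduction to two periods is clean for naively-Nash because the sections $B^k(\mathbf{x}^k_{-t})$ already isolate coordinate $t$; for the naive model it is slightly subtler because the set $B^k(\mathbf{x}^k_{t-1})$ depends on the realized earlier coordinates and the agent's planned later coordinates simultaneously, so the ``first period'' in my auxiliary reduction genuinely bundles coordinates $1,\dots,t$ and the construction of $\succsim_t$ must be compatible with the construction of $\succsim_{t'}$ for $t'\ne t$. But since the $\succsim_t$ are allowed to be completely unrelated preferences (the theorem imposes no cross-agent restriction), I can construct each one in isolation on the relevant quotient of the data, and the only thing linking them is that they all act on the same $X$ — which is harmless, since each is defined by extending an acyclic relation and can be made to agree with an arbitrary fixed background order off the data-relevant part. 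I would present the proof as: (i) a lemma packaging the single-agent extension step (given an acyclic ``feasible-when-other-chosen'' relation on the data, produce a weak order making each observed point the unique max of its section); (ii) deduce (2) and (3) by applying the lemma to each $t$; (iii) prove (1) by first invoking (3)'s construction for $t=2,\dots,T$ (or rather, for the tail agents, mimicking the strict-Nash construction) and then handling the phantom-plan selection for agent $1$ (and each $t<T$) via Condition~5.

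The main obstacle I anticipate is statement (1): correctly formulating and verifying the phantom-plan selection so that the constructed $\succsim_t$ for each $t<T$ simultaneously (a) makes the \emph{selected plan} in $B^k(\mathbf{x}^k_{t-1})$ optimal for agent $t$, while (b) making the \emph{realized} $x^k_t$ the best response for agent $t$ in $B^k(\mathbf{x}^k_{t})$ — note that (a) and (b) are conditions on the \emph{same} preference $\succsim_t$ at two different ``information sets,'' so the revealed-preference relation feeding the Szpilrajn extension must merge the comparisons coming from both roles, and it is precisely Condition~5 together with $T$-SARP that must be shown to keep this merged relation acyclic. Getting the bookkeeping of ``which set was available when which point was chosen or planned'' exactly right — and checking that no extra cycles sneak in when we glue the plan-comparisons to the realized-choice comparisons — is where the real work lies; the rest is standard revealed-preference extension.
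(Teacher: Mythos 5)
Your overall plan---handle each agent with a separate SARP-type extension and use Condition~5 to select phantom plans---is in the right neighborhood, but the proposal rests on two misreadings that leave genuine gaps. First, $T$-SARP is not a joint cycle condition across all periods; it is $t$-SARP instantiated at $t=T$ only, i.e.\ it concerns only the sections $B^k(\mathbf{x}^k_{-T})=B^k(\mathbf{x}^k_{T-1})$. So in your sufficiency argument for statement~(2) the claim that $T$-SARP makes the relation ``chosen while the other was feasible in the $t$-section'' acyclic for every $t$ is false; no such acyclicity is available for $t<T$, and none is needed: Equation~\eqref{Eq36} only asks for membership in the max set, so it is vacuously satisfiable by taking $\succsim_t$ totally indifferent for $t<T$, while \eqref{Eq37} (which coincides with \eqref{Eq38} at $t=T$) is the only piece with bite and is characterized by $T$-SARP by the standard single-agent result (likewise, \eqref{Eq36} yields no strict revealed preference, so your necessity argument only works through $\succsim_T$). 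Your statement~(3) is fine and matches the standard per-agent argument.

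Second, and more seriously, your treatment of statement~(1) misstates what naive rationalizability requires and therefore aims the construction at the wrong problem. For $t<T$ the only condition on $\succsim_t$ is \eqref{Eq30}: the $\succsim_t$-maximum of $B^k(\mathbf{x}^k_{t-1})$ must lie in $B^k(\mathbf{x}^k_t)$; there is no further requirement that the realized $x^k_t$ (or $\mathbf{x}^k$) be optimal for agent $t$ at the later node, so the ``two information sets'' tension you identify as the heart of the proof does not exist, and the merged plan/choice relation whose acyclicity you propose to extract from $T$-SARP plus Condition~5 is not the right object (your condition~(b) is not part of the definition and would typically fail in the rationalization one constructs). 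Relatedly, the phantom plan must be chosen \emph{inside} $B^k(\mathbf{x}^k_t)$ and \emph{outside} the sets $B^{k'}(\mathbf{x}^{k'}_{t-1})\setminus B^{k'}(\mathbf{x}^{k'}_t)$, not inside $B^k(\mathbf{x}^k_{t-1})\setminus B^k(\mathbf{x}^k_t)$ as you write. The actual argument applies Condition~5 to nested subsets $S\subseteq K$ to run a greedy enumeration: after relabeling, pick $\mathbf{y}^1\in B^{1}(\mathbf{x}^1_t)\setminus\bigcup_{k\in K}\bigl[B^{k}(\mathbf{x}^k_{t-1})\setminus B^k(\mathbf{x}^k_t)\bigr]$, then repeat with the remaining indices, and define $\succsim_t$ by $\mathbf{y}^1\succ_t\cdots\succ_t\mathbf{y}^K\succ_t$ everything else; \eqref{Eq30} follows because any $\mathbf{y}^{s'}$ with $s'\le s$ that lies in $B^s(\mathbf{x}^s_{t-1})$ must then lie in $B^s(\mathbf{x}^s_t)$. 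This construction is carried out independently for each $t<T$ and makes no use of $T$-SARP there ($T$-SARP is used only to build $\succsim_T$ satisfying \eqref{Eq31}); none of this selection-and-ordering step appears in your proposal beyond a gesture at a ``Hall-type'' argument, so the sufficiency half of statement~(1) is not established.
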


\section{Discussion and conclusion}\label{sec:discussion}

Our paper spans two different approaches to revealed preference theory. The first uses the structure of consumption space, in which assumptions regarding convexity (and possibly monotonicity) are meaningful and can be important. Smoothness assumptions are useful as well. To the extent that the theory is meant to be used with survey data, it is important to obtain testable implications even when we have a single observation of intertemporal consumption.

The second, more abstract setting, seeks to isolate the problem of testing for rationality from the consumption environment. Think of Arrow's (or Sen's) formulation of the choice problem, in contrast to the theory of Samuelson and Hicks. The simplification afforded by the abstract setting is that there is no need to ensure that rationalizing preferences possess any special structure, as long as some discipline avoids a trivial rationalization.  In the abstract setting, it is crucial to have multiple observations in order to obtain any testable implications.

We proceed to discuss these aspects of our paper further.

Our characterization results rely on datasets that involve multiple observations. This is a key difference from \cite{blowbrowningcrawford2021} (and the earlier work of \cite{browning1989nonparametric}), who are primarily interested in survey data of consumers making choices over time. In survey data, one would typically only observe each consumer making a unique choice in every period. Our results are, in contrast, applicable to experimental data. It is common in experiments to have participants make more than one choice, and then pay out a choice selected at random.

The multiplicity of observations in our model raises two issues. The first is that, perhaps, having access to many observations gives FOCs rationalizability more power to reject the property of being quasi-hyperbolic rationalizable by the sophisticated model. To this end, we present an example in  Appendix~\ref{sec:focsappendix} that shows how, with an arbitrarily large dataset, there is still a gap between the FOCs notion of rationalizability, and equilibrium rationalizability by means of the sophisticated quasi-hyperbolic model. Indeed the gap is, in that case, between strong FOCs rationalizability and equilibrium rationalizability.

The second issue regards the motivation.  If datasets with multiple observations make most sense for experimental data, then one could also design experiments in which plans are observable by having experiment participants first formulate a plan, and then coming back to the lab to revise their choices. Such experiments have been run, but are notoriously difficult to implement, and raise questions about the exact inferences that one can make from the agents' choices. So we still believe that there is value in having a characterization that will make sense for standard experimental designs implemented over time.

Turning to the abstract choice model, it allows us to get a handle on the sophisticated problem, which seems (to us) intractable for the consumption environment. In the abstract choice model, there is no structure on the rationalizing preferences, because there is really no structure in the choice environment. This lack of discipline on rationalizing preference does not inherently render the problem easier, but it allows us to identify the relevant revealed preference inference that can be made from observed intertemporal choices. In consequence, we obtain quite naturally necessary conditions that the data must satisfy to be consistent with the sophisticated model. Such necessary conditions turn out, for the most part, to be sufficient. Further necessary conditions are discussed and explored in Appendix~\ref{sec:appB}.

\section{Proofs}\label{sec:proofs}

As a preparation for the proof of Theorem~\ref{thm:example}, we derive convenient expressions for the model's first- and second-order conditions. 

\subsection{First-Order Conditions}\label{sec:prep1}  Agent 2's objective function is strictly concave, so the interior solutions to 2's optimization problem are characterized by the first-order condition
\begin{equation}\label{eq:foc2}
\frac{u'(x_2)}{u'(x_3)} = p_2\, \beta\,\delta.
\end{equation}

By strict concavity, the optimum is unique, and we may write the choice of $x_3$ as a function of $x_2$ by $g(x_2)$, where $g(x_2) = (u')^{-1} (A\, u'(x_2))$, and $A=\frac{1}{\beta\da p_2}$. We may also write period-1 consumption as \[
x_1 = f(x_2) =\frac{m-p_2\,x_2 - g(x_2)}{p_1}
\] (recall that $p_3=1$). Note that the function $g$ is smooth and strictly monotone increasing, while $f$ is smooth and strictly decreasing.

In Section~\ref{sec:bbcqhd}, we formulated the game between the two agents using optimal strategies $s_2$ and $s_3$ for Agent~2; but note that $s_2$ and $f$ are inverses of each other, and that $s_3$ is determined from $s_2$ by the budget constraint.\footnote{Indeed, the budget constraint is $p_2 s_2(x_1) + s_3(x_1) = m-p_1 x_1$ (when $p_3=1$). So $p_2 s'_2(x_1) + s'_3(x_1) = -p_1.$ Agent 1's FOC is: $u'(x_1) + \beta\da u'(x_2)s'_2(x_1) + \beta\da^2 u'(x_3)s'_3(x_1)=0$, which becomes $u'(x_1) = \beta\da[ -u'(x_2)s'_2 +\da u'(x_3)(p_1+p_2 s'_2)]$. So $\frac{u'(x_1)}{u'(x_2)} =\da \left[\frac{p_1}{p_2}+(1-\beta) s'_2\right]$ since $u'(x_3)=u'(x_2)/{\beta\delta p_2}$. The formulation in \cite{blowbrowningcrawford2021} uses a consumption function that depends on current period wealth. Let the consumption function be $c_2(A_2)$ and note that
$s_2(x_1) = c_2(m-p_1x_2),$ so $s'_2 = -p_1 c'_2 = - \frac{p_1}{p_2}\mu_2$;
where we have used that $\mu_t=p_t c'_t(A_t)$. Putting these together we get $\frac{u'(x_1)}{u'(x_2)} =\da\frac{p_1}{p_2} \left[1-(1-\beta) \mu_2\right]$. 
This is the equation used in Blow et.\ al.}

It is convenient to represent Agent 1's problem as choosing $x_2$ to maximize \[
u(f(x_2)) + \beta\delta\, u(x_2) + \beta\delta^2\, u(g(x_2)), 
\] subject to the relevant non-negativity constraints. The first-order condition for an interior solution is then
\begin{equation}\label{eq:focint}
u'(f(x_2))f'(x_2) + \beta\delta\, u'(x_2) + \beta\delta^2\, u'(g(x_2))g'(x_2)=0.
\end{equation} 
By definition of $f$ and $g$, we then obtain
\begin{equation}\label{eq:foc1}
\frac{u'(x_1)}{u'(x_2)}=\frac{\delta\, p_1}{p_2}\frac{g'(x_2)+\beta\, p_2}{g'(x_2)+p_2}.\end{equation}

\begin{lem}\label{lem:fg-sfocs}
    Let $(x,p)$ be a dataset and $(u,\beta,\da)\in \U_+\times (0,1)\times (0,1]$ be such that~\eqref{eq:foc2} and~\eqref{eq:foc1} are satisfied. Let $\mu_1\in (0,1)$ be arbitrary, \[
    \mu_2=\frac{p_2}{\frac{u''(x_2)}{p_2\,\beta\,\delta u''(x_3)}+p_2},
    \] and $\mu_3=1$, then Equations~\eqref{eq:FOCsrat} and~\eqref{eq:strongfocs} are satisfied. In other words, $(u,\beta,\da)$ is a strong FOCs rationalization of $(x,p)$.
\end{lem}
\begin{proof}
Define $\lambda=\frac{\delta\, u'(x_1)(1-(1-\beta)\mu_1)}{p_1}$. We then obtain Equation \eqref{eq:FOCsrat} for $t=1$.

From Equation \eqref{eq:foc1}, we obtain 
\[u'(x_2)=u'(x_1)\, \frac{p_2}{\delta\, p_1}\frac{g'(x_2)+p_2}{g'(x_2)+\beta\,p_2}=\lambda\, \frac{p_2}{\delta^2}\,\frac{1}{1-(1-\beta)\mu_1}\frac{g'(x_2)+p_2}{g'(x_2)+\beta\,p_2}\]
\[=\lambda\, \frac{p_2}{\delta^2}\,\prod^2_{i=1}\frac{1}{1-(1-\beta)\mu_i},\]
where we have used that 
\[
    \mu_2=\frac{p_2}{\frac{u''(x_2)}{p_2\,\beta\,\delta u''(x_3)}+p_2}= \frac{p_2}{g'(x_2)+p_2}\in (0, 1),
    \] as $g'(x_2)=\frac{A\,u''(x_2)}{u''(x_3)}$ and $A=\frac{1}{\beta\delta p_2}$.

Equation \eqref{eq:foc2}, implies that 
\[u'(x_3)=\frac{u'(x_2)}{p_2\,\beta \delta}=\lambda\, \frac{1}{\delta^3}\,\prod^2_{i=1}\frac{1}{1-(1-\beta)\mu_i}\frac{1}{\beta}=\lambda\, \frac{p_3}{\delta^3}\,\prod^3_{i=1}\frac{1}{1-(1-\beta)\mu_i},\]
where $\mu_3=1$ and $p_3=1$. In other words, we have derived Equation \eqref{eq:FOCsrat} for each $t\le 3$. 

Finally, we obtain Equation~\eqref{eq:strongfocs} from
\[\mu_2=\frac{p_2}{\frac{u''(x_2)}{p_2\,\beta\,\delta u''(x_3)}+p_2}=\frac{\beta\,\delta\, p^2_2\,u''(x_3)}{u''(x_2)+\beta\,\delta\,p^2_2\, u''(x_3)}.\]
\end{proof}

Indeed, strong FOCs rationalizability is equivalent to Equations~\eqref{eq:foc2} and ~\eqref{eq:foc1}. Note that the derivations hold for any $\mu_1\in (0, 1)$ as we can freely choose $\lambda$. Hence, the FOC rationalizability is equivalent to 
\[\frac{u'(x_1)}{u'(x_2)}=\delta\,\frac{p_1}{p_2}\,(1-(1-\beta)\,\mu_2)\text{ and }\frac{u'(x_3)}{u'(x_2)}=\beta\,\delta\,p_2.\]

\subsection{Deriving the Second-Order Conditions} Since Agent 1's objective function is  
\[u\big(f(x_2)\big)+\beta\,\delta\, u(x_2)+\beta\,\delta^2\,u\big(g(x_2)\big),\]
the SOC is 
\[u''(x_1)\,(f'(x_2))^2+\beta\,\delta\,u''(x_2)+\beta\,\delta^2\, u''(x_3)\,(g'(x_2))^2+g''(x_2)\big(\beta\,\delta^2\, u'(x_3)-\frac{u'(x_1)}{p_1}\big)\le 0.\]
Using Equations~\eqref{eq:foc2} and ~\eqref{eq:foc1}, we can further simplify and obtain
\begin{equation}\label{eq:SOCn}
u''(x_1)\,(f'(x_2))^2+\beta\,\delta\,u''(x_2)+\beta\,\delta^2\, u''(x_3)\,(g'(x_2))^2+g''(x_2)\,u'(x_2)\frac{\delta(1-\beta)}{g'(x_2)+p_2}\le 0.\end{equation}

When there is no present-bias, $\beta=1$, the above term is strictly negative since $u$ is strictly concave. Similarly, when $A=1$ (which occurs for some of our examples), we obtain $g(x_2)=x_2$ and again the second-order condition is satisfied. However, with present bias, $\beta<1$, or when $g''>0$, the term above can be strictly positive and the second-order condition violated.

\subsection{Proof of Theorem~\ref{thm:example}}\label{sec:pfthmexample} 

It is obvious that  $\textrm{EQ}\subseteq \textrm{SFOC}\subseteq \textrm{FOC}$. We proceed to show that the other statements in the theorem.

\emph{Part 1: There exists a dataset in FOC that is not in SFOC.} Consider a dataset with $x_1=x_2=x_3$, $p_2=3$, and $p_1=4$. This is FOCs rationalizable by the sophisticated quasi-hyperbolic model by setting $\la=\da=1$, $\beta=1/3$, $\mu_1=1/2$, $\mu_3=1$ and $\mu_2=3/8$. It is then easy to verify the definition of FOCs rationalizability.  We obtain that \[ 
    \la\, \frac{p_t}{\da^t}\prod_{i=1}^t\frac{1}{1-(1-\beta)\mu_i} = p_t\prod_{i=1}^t\frac{1}{1-(1-\beta)\mu_i}=6\] for $t=1,2,3$. Now setting $u'(x_t)=6$, and letting $u$ be any utility function in $\mathcal{U}_{+}$ that has derivative $=6$ at  the point $x_1$, renders the data FOCs rationalizable as desired.

The data is, however, not strong FOC rationalizable. Suppose it were, and let $(u,\beta,\da)$ be such a rationalization. Then~\eqref{eq:foc2} and $x_2=x_3$ imply that 
\[1=\frac{u'(x_2)}{u'(x_3)}=\beta\delta p_2.\] So $\beta\da = 1/3$. Consequently, $g(x_2) = (u')^{-1} (\frac{1}{\beta\da p_2} u'(x_2)) = x_2$, as $\beta\da p_2=1$. So $g'(x_2)=1$. On the other hand,~\eqref{eq:foc1} and $x_1=x_2$ imply that 
\[1=\frac{\delta\, p_1}{p_2}\frac{g'(x_2)+\beta\, p_2}{g'(x_2)+p_2}=\delta\,\frac{4}{3}\frac{1+3\beta}{1+3}=\frac{\delta+3\beta\delta}{3}=\frac{\delta+1}{3}\] and hence that $\da=2$. A contradiction.

\smallskip
\noindent\emph{Part 2. Any dataset in $FOC$ that has $x_t\neq x_s$ for all $t\neq s$ is strong FOCs rationalizable.} 

To prove this, we claim that, if $(\hat u,\beta,\da,(\mu_t)_{t=1}^3)$ is a FOCs rationalization, then we may find a strong FOCs rationalization $(u,\beta,\da,(\mu_t)_{t=1}^3)$ for which $u'(x_t)=\hat u'(x_t)$ for all $t$. To this end, let $a_t=\hat u'(x_t)>0$, and choose $b_t<0$ so that $\mu_2 = \frac{A^{-1}\, b_3\, p_2}{b_2+ A^{-1}\, b_3\, p_2}$ holds. Note that $(u,\beta,\da,(\mu_t)_{t=1}^3)$ will be a FOCs rationalization if $u'(x_t)=a_t$. Moreover, Equation~\eqref{eq:strongfocs} will be satisfied if  $u''(x_t)=b_t$.  

Consider the function $h_t(x) = a_t+b_t(x-x_t)$. Note that $h_t$ is monotone decreasing and that $a_t = h_t(x_t)<h_s(x_s)=a_s$ when $x_s<x_t$, as $\hat u$ is strictly concave. Given that $x_s\neq x_t$ for $t\neq s$ we may find disjoint neighborhoods $N_t$ of each $x_t$ so that $h_t$ is smaller on $N_t$ than $h_s$ on $N_s$ when $x_s<x_t$, and greater on $N_t$ than $h_s$ on $N_s$ when $x_s>x_t$. Define a function $h:\Re_+\to\Re$ by letting $h$ equal $h_t$ on $N_t$, $h(0)>\sup\{h_t(x):x\in N_t, 1\leq t\leq 3 \}$, and by linear interpolation on $\Re_+\setminus (\{0\}\cup (\cup_t N_t))$. Then $h$ is monotone decreasing, $h(x_t)=\hat u'(x_t)$, and $h'_t=b_t$ for all $t$. Letting $u(x)=\int_{0}^x h(z)\diff z$, we have $u'(x_t)=h(x_t)=a_t$ and $u''(x_t)=h'(x_t)=b_t$. 

\smallskip
\noindent\emph{Part 3.} $I\subseteq FOC$ directly follows from Proposition 1 of \cite{blowbrowningcrawford2021}.
 
\subsection{Proof of Theorem~\ref{thm:delta}}

\emph{Part 1:} To prove the first statement in the Theorem, fix $\da^*\in (0,1)$ and $\beta^*\in (0,1)$ and consider a dataset with $x_1=x_2=x_3$ and
\[p_2=\frac{1}{\beta^*\delta^*}\text{ and }
p_1 = \frac{1+\beta^*\delta^*}{(\beta^*\delta^*)^2(1+\delta^*)}.
\]

Let $(u,\beta,\da)$ be an equilibrium rationalization. We claim that $\da=\delta^*$ and $\beta=\beta^*$. 

Since $x_2=x_2$, by Equation \eqref{eq:foc2}, we have $1=\beta\da p_2$; i.e., $\beta\delta=\beta^*\delta^*$. Moreover, we also obtain $A=1$, which means that $g'(x_2)=1$. Since $x_1=x_2$, by Equation \eqref{eq:foc1}, we have 
\[1=\delta\,\frac{p_1}{p_2}\frac{g'(x_2)+\beta p_2}{g'(x_2)+p_2}
=
\delta\,\frac{1+\beta^*\delta^*}{\beta^*\delta^*(1+\delta^*)}\frac{1+\frac{\beta}{\beta^*\delta^*}}{1+\frac{1}{\beta^*\delta^*}}=\frac{\delta+1}{1+\delta^*}.\]
 Hence, $\da=\delta^*$ and $\beta=\beta^*$.

Observe that the data we have proposed are strong FOCs rationalizable, as we may
choose any strictly concave utility that has the requisite value of $u'(x_t)$ (given
$x_1=x_2=x_3$). In fact, it is equilibrium rationalizable because $\beta\delta p_2=1$ means that $g(x)=x$, which in turn implies that Agent 1's objective function, $u(f(x_2))+\beta\delta u(x_2)+\beta\delta^2 u(g(x_2))$, is strictly concave. So the dataset is certainly FOCs rationaliable.

On the other hand, the proof of Proposition 1 in \cite{blowbrowningcrawford2021}
shows that whenever a dataset is FOCs rationalizable then it is without loss of
generality to set $\da=1$. It is in fact easy to show that the data is FOCs rationalizable with $(u, \beta', \delta')$ with $\delta'=1$ and $\beta'=\beta^*\delta^*$ by setting $\mu_2=\frac{1-\beta^*(\delta^*)^2}{1-(\beta^*)^2(\delta^*)^2}$.
 
\emph{Part 2:} Finally, we prove the second statement in Theorem~\ref{thm:delta}.

Consider a dataset with $x_1=0.04, x_2=0.05$,  $x_3=0.4698$, and prices $p_2=2$ and $p_1=3.0969$ (consequently, $m=0.694$). We claim that $(u,\beta,\da)$, with $\beta=\delta=0.8$, and $u(x)=x-\frac{x^3}{3}$ when $x\in (0, 1)$, is a strong FOCs rationalization by the sophisticated quasi-hyperbolic model. 

Indeed, note that  $g(x)=\sqrt{1-A\,(1-x^2)}$ where $A=0.78125$. By direct calculation, we obtain \[g'(x_2)=\frac{A\, x_2}{\sqrt{1-A\,(1-x^2_2)}}=0.083147\text{ and }g''(x_2)=\frac{A\,(1-A)}{(1-A\,(1-x^2_2))^\frac{3}{2}}=1.648.\]
To verify strong FOCs rationalizability, note that Equation~\eqref{eq:foc2} is satisfied since
\[\frac{u'(x_2)}{u'(x_3)}=\frac{1-x^2_2}{1-x^3_2}=\frac{1-0.05^2}{1-0.4698^2}=\frac{1}{0.78125}=\beta\,\delta\, p_2.\]
Moreover, Equation~\eqref{eq:foc1} is satisfied since
\[\frac{u'(x_1)}{u'(x_2)}=\frac{1-0.04^2}{1-0.05^2}=\frac{0.8\times 3.1}{2} \left( \frac{0.083147+1.6}{0.083147+2}\right)=\frac{\delta\, p_1}{p_2} \left( \frac{g'(x_2)+\beta\, p_2}{g'(x_2)+p_2}\right).\] 

To check the equilibrium rationalizability, let us now consider the second order condition for Agent 1. However, we have 
\begin{align*}
  g''(x_2)\,u'(x_2)\frac{\delta(1-\beta)}{g'(x_2)+p_2}
  & =0.1263  > 0.1035 \\
  & =|u''(x_1)\,(f'(x_2))^2+\beta\,\delta\,u''(x_2)+\beta\,\delta^2\, u''(x_3)\,(g'(x_2))^2|,
\end{align*}
which means Equation~\eqref{eq:SOCn} is violated. Hence, the observed consumption bundle $(x_1, x_2, x_3)$ is a local minimizer for Agent 1's optimization problem. 

\subsection{Proof of Theorem \ref{thm:sophdistinct}.} Suppose $x^k_1\neq x^s_1$ for any $k, s\in K$ with $k\neq s$. Under this assumption, N-SARP becomes vacuous. Hence, the first and third parts of this theorem directly follow from Theorem~\ref{thm:N}. We now shall prove the second part of Theorem \ref{thm:sophdistinct}.

\medskip
\noindent\textbf{Necessity.} To prove the necessity of Condition 2, we define the following revealed preference relation.

\medskip
\noindent\textbf{Revealed Preference 1.} $\mathbf{x}^k \mathrel R_1 \mathbf{x}^s$ if $\mathbf{x}^k \neq \mathbf{x}^s$ and
\[\mathbf{x}^s\in B^k(x^s_1)\subseteq B^s(x^s_1).\]

To show the acyclicity of $R_1$, we will prove that $\mathbf{x}^k \mathrel R_1 \mathbf{x}^s$ implies $\mathbf{x}^k \succ_1 \mathbf{x}^s$. Take any $k, s$ such that $\mathbf{x}^k \mathrel R_1 \mathbf{x}^s$. Since 
\[\max\Big(\bigcup_{x_1\in B^s_{1}}\max\big(B^s(x_1), \succsim_{2}\big), \succsim_1\Big)=\mathbf{x}^s,\]
we have
\[\max\Big(\max\big(B^s(x^s_1), \succsim_{2}\big), \succsim_1\Big)=\mathbf{x}^s;\]
which implies
\[\mathbf{x}^s\in \max\big(B^s(x^s_1), \succsim_{2}\big).\]
Moreover, since 
\[\max\Big(\big\{\max\big(B^k(x_1), \succsim_{2}\big)\big\}_{x_1\in B^k_{1}}, \succsim_1\Big)=\mathbf{x}^k\text{ and }x^s_1\in B^k_1,\]
we have
\[\mathbf{x}^k\succ_1 \max\Big(\max\big(B^k(x^s_1), \succsim_{2}\big), \succsim_1\Big)\text{ or }\max\Big(\max\big(B^k(x^s_1), \succsim_{2}\big), \succsim_1\Big)=\mathbf{x}^k.\]
Since $\mathbf{x}^s\in B^k(x^s_1)$ and $B^k(x^s_1)\subseteq B^s(x^s_1)$,
\[\mathbf{x}^s\in \max\big(B^s(x^s_1), \succsim_{2}\big)\Rightarrow \mathbf{x}^s\in \max\big(B^k(x^s_1), \succsim_{2}\big).\]
Therefore, since $\mathbf{x}^s\in \max\big(B^k(x^s_1), \succsim_{2}\big)$, then we have $\mathbf{x}^k\succ_1\mathbf{x}^s$, the desired result.

\medskip
\noindent\textbf{Sufficiency.} We now show the sufficiency of Condition 2. By Condition 2, $R_1$ is acyclic. Let $\succsim_1$ be a preference relation that extends $R_1$ such that $\mathbf{x}^k\succ_1\mathbf{x}^s$ if $\mathbf{x}^k \mathrel R_1 \mathbf{x}^s$. Moreover, let $\mathbf{x}^k\succ_1\mathbf{x}$ for any $\mathbf{x}\in X\setminus \bigcup_{s\in K}\{\mathbf{x}^s\}$.


Let $\succsim_2$ be a preference relation such that 
\[\mathbf{x}^k\succ_{2} \mathbf{x}\text{ for any }\mathbf{x}\in B^k(x^k_1)\setminus\{\mathbf{x}^k\}\]
and 
\[\mathbf{x}\succ_{2}\mathbf{x}^k\text{ for any }\mathbf{x}\in \big(\{x^k_1\}\times X_2\big)\setminus B^k(x^k_1).\]

Observe that  $\succsim_2$ is well-defined. Note that there is no $\mathbf{x}$ such that $\mathbf{x}^k\succ_2 \mathbf{x}$ and $\mathbf{x}\succ_2 \mathbf{x}^2$ because the former requires that $\mathbf{x}$ be an element of $B^k(x^k_1)$, while the latter requires $\mathbf{x}$ to be not an element of $B^k(x^k_1)$. Moreover, longer cycles of $\succ_2$ cannot occur because $x^k_1\neq x^s_1$ for any $k, s\in K$ with $k\neq s$. Hence there exists a relation $\succsim_2$ that satisfies the stated properties.

Hence, 
\[\max\big(B^k(x^k_1), \succsim_{2}\big)=\mathbf{x}^k.\]


We shall show that
\[\max\Big(\bigcup_{x_1\in B^k_{1}}\max\big(B^k(x_1), \succsim_{2}\big), \succsim_1\Big)=\mathbf{x}^k.\]
It is enough to prove that whenever $\mathbf{x}\in \max\big(B^k(x_1), \succsim_{2}\big)$ and $x_1\neq x^k_1$, we have $\mathbf{x}^k\succ_1 \mathbf{x}$. By the construction of $\succsim_1$, we have $\mathbf{x}^k\succ_1\mathbf{x}$ for any $\mathbf{x}\in X\setminus \bigcup_{s\in K}\{\mathbf{x}^s\}$. Hence, it is enough to show that $\mathbf{x}^s\in \max\big(B^k(x^s_1), \succsim_{2}\big)$ implies $\mathbf{x}^k\succ_1 \mathbf{x}^s$. First, $\mathbf{x}^s\in \max\big(B^k(x^s_1), \succsim_{2}\big)$ implies that  $\mathbf{x}^s\in B^k(x^s_1)$. Moreover, by the construction of $\succsim_{2}$, $\mathbf{x}^s\in \max\big(B^k(x^s_1), \succsim_{2}\big)$ implies that $B^k(x^s_1)\subseteq B^s(x^s_1)$. That is because, if $B^k(x^s_1)\not\subseteq B^s(x^s_1)$, then $B^k(x^s_1)\setminus B^s(x^s_1)\succ_{2} \mathbf{x}^s$; i.e., $\mathbf{x}^s\not\in \max\big(B^k(x^s_1), \succsim_{2}\big)$.  Therefore, we have $\mathbf{x}^s\in B^k(x^s_1)\subseteq B^s(x^s_1)$; i.e., $\mathbf{x}^k \mathrel R_1 \mathbf{x}^s$. Hence, $\mathbf{x}^k\succ_1 \mathbf{x}^s$.

\subsection{Proof of Theorem~\ref{thm:N}.1.} \textbf{Necessity.} Let us first prove the necessity of N-SARP and Condition 1.\bigskip

\noindent\textbf{N-SARP.} Take any sequence $k_1, \ldots, k_L$ of $K$ such that for each $l\le L-1$, $\mathbf{x}^{k_l}\neq \mathbf{x}^{k_{l+1}}$ and Equation (\ref{Eq3}) holds. Note that $\mathbf{x}^{k_l}\in B^{k_{l+1}}(x^{k_{l+1}}_1)$ and $\mathbf{x}^{k_l}\neq \mathbf{x}^{k_{l+1}}$ imply that $\mathbf{x}^{k_{l+1}}\succ_2 \mathbf{x}^{k_{l}}$. Therefore, we cannot have $\mathbf{x}^{k_{L}}\succ_2 \mathbf{x}^{k_{1}}$; i.e., Equation~\eqref{Eq3} does not hold when $l=L$.\bigskip

\noindent\textbf{Condition 1.} Take any subset $S$ of $K$. For any $k\in S$, Equation~\eqref{EqN1}, $\max(B^k, \succsim_1)\in B^k(x^k_1)$, implies that 
\[\max(B^k(x^k_1), \succsim_1)\succ_1\max(B^k\setminus B^k(x^k_1), \succsim_1).\]
Consequently, we have 
\[\max(\bigcup_{k\in S} B^k(x^k_1), \succsim_1)\succ_1\max( \bigcup_{k\in S} B^k\setminus B^k(x^k_1), \succsim_1).\]
Hence, $\max(\bigcup_{k\in S} B^k(x^k_1), \succsim_1)\not\in \bigcup_{k\in S} B^k\setminus B^k(x^k_1)$; i.e., we obtain Equation~\eqref{Eq4}. 

\bigskip

\noindent\textbf{Sufficiency.} To prove the sufficiency direction, we first define the following revealed preference relation. 

\medskip
\noindent\textbf{Revealed Preference 2.} $\mathbf{x}^k \mathrel{R_2} \mathbf{x}^s$ if $\mathbf{x}^k \neq \mathbf{x}^s$ and $\mathbf{x}^s\in B^k(x^k_1)$.\medskip

By N-SARP, $R_2$ is acyclic. Let $\succsim_2$ be a complete extension of $R_2$ such that $\mathbf{x}^k\succ_2 \mathbf{x}$ for any $\mathbf{x}\in X\setminus \{\mathbf{x}^s\}_{s\in K}$. Take any $k\in K$. For any $s$ with $\mathbf{x}^k \neq \mathbf{x}^s$ and $\mathbf{x}^s\in B^k(x^k_1)$, $\exb^k\succ_2 \exb^s$. Hence, $\max\big(B^k(x^k_1), \succsim_2\big)\in X\setminus \{\mathbf{x}^s\}_{s: \mathbf{x}^k \neq \mathbf{x}^s}$. By the construction of $\succsim_2$, $\max\big(B^k(x^k_1), \succsim_2\big)=\mathbf{x}^k$.

Finally, we construct $\succsim_1$ as follows. By Condition 1, the set
\[B^{s}(x^{s}_1)\setminus \big(\bigcup_{k\in K}B^{k}\setminus B^k(x^{k}_1)\big)\]
is not empty for at least one $s\in K$. Without loss of generality, suppose that this happens for $s=1$; and let $B^{1}(x^{1}_1)\setminus \big(\bigcup_{k\in K}B^{k}\setminus B^k(x^{k}_1)\big)$ be a non-empty and $\mathbf{y}^1$ be an element of this set. Similarly, 
the set
\[B^{s}(x^{s}_1)\setminus \big(\bigcup_{k>1}B^{k}\setminus B^k(x^{k}_1)\big)\]
is non-empty for at least one $s>1$. Again, without loss of generality, let $B^{2}(x^{2}_1)\setminus \big(\bigcup_{k>1}B^{k}\setminus B^k(x^{k}_1)\big)$ be non-empty, and $\mathbf{y}^2$ be an element of this set. We follow the same procedure and obtain $\mathbf{y}^s$, for each $s\in K$, as an element of
\[B^{s}(x^{s}_1)\setminus \big(\bigcup_{k\ge s}B^{k}\setminus B^k(x^{k}_1)\big).\]

Let $\succsim_1$ be a preference relation such that $\textbf{y}^s\succ_1\textbf{y}^{s+1}$ for each $s<K$ and $\textbf{y}^K\succ_1\mathbf{x}$ for any $\text{x}\in X\setminus\{\textbf{y}^k\}_{k\in K}$. Since $\textbf{y}^s\in B^s(x^s_1)$ and $\textbf{y}^s\succ_1\mathbf{x}$ for any $\text{x}\in X\setminus\{\textbf{y}^k\}_{k<s}$, 
\[\max\big(B^s, \succsim_1\big)\in \{\textbf{y}^1, \ldots, \textbf{y}^s\}.\]
Suppose for some $t\le s$, 
\[\max\big(B^s, \succsim_1\big)=\textbf{y}^t.\]
By the construction, $\textbf{y}^t\in B^{t}(x^{t}_1)\setminus \big(\bigcup_{k\ge t}B^{k}\setminus B^k(x^{k}_1)\big)$, which implies 
\[\textbf{y}^t\in B^{t}(x^{t}_1)\setminus \big(B^{s}\setminus B^s(x^{s}_1)\big).\] Since $\mathbf{y}^t\in B^s$, $\textbf{y}^t\in B^{t}(x^{t}_1)\setminus \big(B^{s}\setminus B^s(x^{s}_1)\big)$ implies that  $\mathbf{y}^t\in B^s(x^{s}_1)$, i.e.,  
\[\mathbf{y}^t=\max\big(B^s, \succsim_1\big)\in B^s(x^s_1).\]

\subsection{Proof of Theorem~\ref{thm:N}.2.} It is immediate from standard results that rationalization via Equation \eqref{EqNN2} is equivalent to N-SARP. Hence, N-SARP is necessary. Moreover, N-SARP is necessary since Equation \eqref{EqNN1} can be always satisfied by setting $\succsim_1$ to be indifferent among all elements of $X$.

\subsection{Proof of Theorem~\ref{thm:N}.3.} It is immediate from standard results that rationalization via Equation \eqref{EqSNN2} is equivalent to N-SARP and rationalization via Equation \eqref{EqSNN1} is equivalent to NN-SARP. 

\subsection{Proof of Theorems~\ref{thm:sopgen} and~\ref{thm:naigen}}

Theorem \ref{thm:sopgen}.3 and Theorem \ref{thm:naigen}.2-3 immediately follow from the following two observations. First, from standard results, Equation (\ref{Eq38}) is characterized by $t$-SARP. Second, Equation (\ref{Eq36}) has no testable implications as we can set $\succsim_t$ to be indifferent between all alternatives of $X$. Theorem \ref{thm:sopgen}.2 follows from Theorem \ref{thm:naigen}.1. Hence, we only prove Theorem \ref{thm:sopgen}.1 and Theorem \ref{thm:naigen}.1.  

\subsubsection{Proof of Theorem \ref{thm:sopgen}.1.} First we show necessity. Suppose
that the dataset $\D$ is sophisticated rationalizable by means of the preference
relations $\succsim_t$, $1\leq t\leq T$. To prove the necessity of
Condition 2, recall the revealed preference relation $R_1$ defined in the proof of
Theorem~\ref{thm:sophdistinct}: $\mathbf{x}^k \mathrel R_1 \mathbf{x}^s$ if
$\mathbf{x}^s\in B^k(x^s_1) \subseteq B^s(x^s_1)$. We shall prove that $R_1$ is
acyclic. Take any $k, s$ with $\mathbf{x}^k \mathrel R_1 \mathbf{x}^s$. To prove acyclicity, we will show that $\mathbf{x}^k\succ_1\mathbf{x}^s.$ 

Since $\mathbf{x}^s=\max(M^s_1, \succsim_1)$, we have 
\[\mathbf{x}^s\in M^s_1=\bigcup_{x_{1}\in B^k_{1}}\max\big(M^k_{2}(x_1), \succsim_{2}\big),\]
consequently, $\mathbf{x}^s\in M^s_2(x^s_1)$.
Since $\mathbf{x}^s\in B^k(x^s_1) \subseteq B^s(x^s_1)$, we then have 
$\mathbf{x}^s\in B^k(\mathbf{x}^s_t) \subseteq B^s(\mathbf{x}^s_t)$ for each
$t<T$. Hence, $\mathbf{x}^s\in M^s_2(x^s_1)$ implies $\mathbf{x}^s\in
M^k_2(x^s_1)$.\footnote{Observe that working with a strong rationalization is key
  here, as otherwise there an agent who moves later could choose a different optimal
  response in observation $k$ from what they do in $s$.} Finally, since $\mathbf{x}^k=\max (M^k_1, \succsim_1)$, we have $\mathbf{x}^k\succ_1\mathbf{x}^s.$

We now prove the sufficiency of Condition 2. Let $\hat X_2=\prod^T_{t=2} X_t$. Then $X=X_1\times \hat X_2 $. By Theorem \ref{thm:sophdistinct}, there are preferences $\succsim_1$ and $\succsim_2$ over $X$ such that 
for each observation $k\in K$,
\begin{equation}\label{eq28}
\max\Big(\bigcup_{x_1\in B^k_{1}}\max\big(B^k(x_1), \succsim_2\big), \succsim_1\Big)=\mathbf{x}^k.\end{equation}

Let $\succsim_t=\succsim_2$ for each $t\ge 3$. We can show that Equation (\ref{Eq26}) is satisfied. Take any $k\in K$.

\medskip
\noindent\textbf{Fact 1.} For any $A, B\subseteq X$, \[\max(A\cup B, \succsim_2)=\max\big(\max(A, \succsim_2)\cup \max(B, \succsim_2), \succsim_2\big).\]

Fact 1 is the usual path independence property of rational choice. We provide a proof
for completeness: Take any $\mathbf{x}\in \max(A\cup B, \succsim_2)$ and $\mathbf{x}\in A$. Then $\mathbf{x}\in \max(A, \succsim_2)$. Since $\mathbf{x}\succsim_2 \max(B, \succsim_2)$, 
$\mathbf{x}\in \max\big(\max(A, \succsim_2)\cup \max(B, \succsim_2),
\succsim_2\big)$. Conversely, consider any $\mathbf{x}\in \max\big(\max(A, \succsim_2)\cup \max(B, \succsim_2), \succsim_2\big)$. In other words, $\mathbf{x}\succsim_2 \max(A, \succsim_2)$ and $\mathbf{x}\succsim_2 \max(B, \succsim_2)$. Hence, $\mathbf{x}\in \max(A\cup B, \succsim_2)$.

\medskip
\noindent\textbf{Fact 2.} $\max(B^k(x_1), \succsim_2)=\max(M^k_2(x_1), \succsim_2)$.\medskip

We proceed to prove Fact 2.  By Fact 1, Equation (\ref{Eq27}) and
$\succsim_t=\succsim_2$ implies that

\begin{equation}
\max(M^k_t(\mathbf{x}_{t-1}), \succsim_2)=\max\Big(\bigcup_{x_{t}\in B^k_{t}(\mathbf{x}_{t-1})} M^k_{t+1}(\mathbf{x}_{t}), \succsim_2\Big).
\end{equation}

By repeatedly using Fact 1, the above equation implies 

\begin{equation}
\max(M^k_1(x_1), \succsim_2)=\max\Big(\bigcup_{x_2, \ldots, x_t: \mathbf{x}\in B^k(x_1)} M^k_{T}(\mathbf{x}_{T-1}), \succsim_2\Big)=\max(B^k(x_1), \succsim_2),
\end{equation} which establishes Fact 2.

Finally, by Facts 1 and 2, Equation (\ref{eq28}) implies that

\begin{equation}\label{eq29}
\max\Big(\bigcup_{x_1\in B^k_{1}}\max\big(M^k_2(x_1), \succsim_2\big), \succsim_1\Big)=\mathbf{x}^k,\end{equation}
which implies Equation (\ref{Eq26}).

\subsubsection{Proof of Theorem \ref{thm:naigen}.1.} We first prove the necessity of $T$-SARP and Condition 5. 

\medskip
\noindent\textbf{$T$-SARP.} Consider the following revealed preference relation:
$\mathbf{x}^k \mathrel R_T \mathbf{x}^s$ if $\mathbf{x}^k\neq \mathbf{x}^s$ and $\mathbf{x}^{s}\in B^{k}(\mathbf{x}^{k}_{T-1})$.
By Equation (\ref{Eq31}),  $\mathbf{x}^k \mathrel R_T \mathbf{x}^s$ implies $\mathbf{x}^k \succ_T \mathbf{x}^s$. Hence, $R_T$ is acyclic; i.e., $T$-SARP is satisfied.

\medskip
\noindent\textbf{Condition 5.} Take any subset $S$ of $K$ and $t\le T$. By Equation (\ref{Eq30}), we have $\max(B^{k}(\mathbf{x}^k_{t-1}), \succsim_t)\in  B^k(\mathbf{x}^k_t)$, which implies 
\[\max(B^k(\mathbf{x}^k_t), \succsim_t)\succ_t \max(B^{k}(\mathbf{x}^k_{t-1})\setminus B^k(\mathbf{x}^k_t), \succsim_t).\]
By adding across $k\in S$, we obtain
\[\max(\bigcup_{k\in S} B^k(\mathbf{x}^k_t), \succsim_t)\succ_t \max(\bigcup_{k\in S} [B^{k}(\mathbf{x}^k_{t-1})\setminus B^k(\mathbf{x}^k_t)], \succsim_t).\]
Hence, \[\bigcup_{k\in S} B^k(\mathbf{x}^k_t), \succsim_t)\not\subseteq \bigcup_{k\in S} [B^{k}(\mathbf{x}^k_{t-1})\setminus B^k(\mathbf{x}^k_t)].\]

\noindent\textbf{Sufficiency.} We now prove the sufficiency of $T$-SARP and Condition 5. By $T$-SARP, $R_T$ is acyclic. Let $\succsim_T$ be a complete extension of $R_T$ such that $\mathbf{x}^k\succ_T \mathbf{x}$ for any $\mathbf{x}\in X\setminus \{\mathbf{x}^s\}_{s\in K}$. Take any $k\in K$. For any $s$ with $\mathbf{x}^k \neq \mathbf{x}^s$ and $\mathbf{x}^s\in B^k(\mathbf{x}^k_{T-1})$, $\exb^k\succ_T \exb^s$. Hence, $\max\big(B^k(\mathbf{x}^k_{T-1}), \succsim_T\big)\in X\setminus \{\mathbf{x}^s\}_{s: \mathbf{x}^k \neq \mathbf{x}^s}$. By the construction of $\succsim_T$, $\max\big(B^k(\mathbf{x}^k_{T-1}), \succsim_T\big)=\mathbf{x}^k$.

Finally, we construct $\succsim_t$ as follows. By Condition 5, the set
\[B^{s}(\mathbf{x}^s_{t})\setminus \big(\bigcup_{k\in K}B^{k}(\mathbf{x}^k_{t-1})\setminus B^k(\mathbf{x}^k_{t})\big)\]
is not empty for at least one $s\in K$. Without loss of generality, suppose that this
happens for $s=1$. So  $B^{1}(\mathbf{x}^1_{t})\setminus \big(\bigcup_{k\in
  K}B^{k}(\mathbf{x}^k_{t-1})\setminus B^k(\mathbf{x}^k_{t})\big)$ is non-empty, and
we may choose $\mathbf{y}^1$ to be an element of this set. Similarly, 
the set
\[B^{s}(\mathbf{x}^s_{t})\setminus \big(\bigcup_{k>1}B^{k}(\mathbf{x}^k_{t-1})\setminus B^k(\mathbf{x}^k_{t})\big)\]
is non-empty for at least one $s\neq 1$. Again, without loss of generality, let $B^{2}(\mathbf{x}^2_{t})\setminus \big(\bigcup_{k>1}B^{k}(\mathbf{x}^k_{t-1})\setminus B^k(\mathbf{x}^k_{t})\big)$ be non-empty, and $\mathbf{y}^2$ be an element of this set. We follow the same procedure and obtain $\mathbf{y}^s$, for each $s\in K$, as an element of
\[B^{s}(\mathbf{x}^s_{t})\setminus \big(\bigcup_{k\ge s}B^{k}(\mathbf{x}^k_{t-1})\setminus B^k(\mathbf{x}^k_{t})\big).\]

Let $\succsim_t$ be a preference relation such that
$\textbf{y}^s\succ_t\textbf{y}^{s+1}$ for each $1\leq s<K$, and
$\textbf{y}^K\succ_t\mathbf{x}$ for any $\text{x}\in X\setminus\{\textbf{y}^k\}_{k\in
  K}$. We shall prove that, for any $s$, $\max(B^s(\mathbf{x}^s_{t-1}),
\succsim_t)\in B^s(\mathbf{x}^s_t)$, which guarantees that the period $t$ agent with
preferences $\succsim_t$ chooses $x^s_t$.

Since $\textbf{y}^s\in B^s(\mathbf{x}^s_{t})\subseteq B^s(\mathbf{x}^s_{t-1})$ and $\textbf{y}^s\succ_t\mathbf{x}$ for any $\mathbf{x}\in X\setminus\{\textbf{y}^k\}_{k\le s}$, 
\[\max\big(B^s(\mathbf{x}^s_{t-1}), \succsim_t\big)\in \{\textbf{y}^1, \ldots, \textbf{y}^s\}.\]
Suppose for some $s'\le s$, 
\[\max\big(B^s(\mathbf{x}^s_{t-1}), \succsim_t\big)=\textbf{y}^{s'}.\]
By the construction, $\textbf{y}^{s'}\in B^{s'}(\mathbf{x}^{s'}_{t})\setminus \big(\bigcup_{k\ge s'}B^{k}(\mathbf{x}^k_{t-1})\setminus B^k(\mathbf{x}^k_t)\big)$, which implies 
\[\textbf{y}^{s'}\in B^{s'}(\mathbf{x}^{s'}_{t})\setminus \big(B^{s}(\mathbf{x}^s_{t-1})\setminus B^s(\mathbf{x}^s_t)\big).\] Since $\textbf{y}^{s'}\in B^{s'}(\mathbf{x}^{s'}_t)$ and $\mathbf{y}^{s'}\in B^{s}(\mathbf{x}^s_{t-1})$, we must have $\textbf{y}^{s'}\in B^s(\mathbf{x}^s_t)$, i.e., \[\max\big(B^s(\mathbf{x}^s_{t-1}), \succsim_t\big)\in B^s(\mathbf{x}^s_t).\]

\bibliographystyle{ecta}
\bibliography{soph}

\appendix

\section{FOCs rationalization with many observations}\label{sec:focsappendix}

\begin{theorem}\label{thm:manyobs}
    There is a dataset of arbitrarily large size that is strong FOCs rationalizable by the sophisticaed quasi-hyperbolic model, but not equilibrium rationalizable by the sophisticated hyperbolic model.
\end{theorem}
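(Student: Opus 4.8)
The plan is to reuse the single-observation counterexample from the proof of Theorem~\ref{thm:example} and pad it with arbitrarily many ``innocent'' observations that are genuine equilibria of one fixed sophisticated quasi-hyperbolic consumer. The structural fact that makes this work is that non-rationalizability propagates to supersets: if a single observation $(x^0,p^0)$ is \emph{not} the equilibrium outcome of any $(u,\beta,\da)$ with $\beta\in(0,1)$ and $\da\in(0,1]$, then no dataset containing $(x^0,p^0)$ can be rationalized by the sophisticated model, since a rationalization of the whole dataset would, in particular, make $x^0$ an equilibrium outcome of that same $(u,\beta,\da)$. FOCs rationalizability, by contrast, does \emph{not} propagate this way: it only requires that the Euler equations hold observation-by-observation under a \emph{common} $(u,\beta,\da)$, with observation-specific multipliers $\la^k,\mu^k_t$. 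Exploiting this asymmetry is the whole point.

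Concretely, I would let $(x^0,p^0)$ be the first-statement example, $x^0_1=x^0_2=x^0_3=x^0$ with $p^0=(4,3,1)$, which the proof of Theorem~\ref{thm:example} shows is not rationalizable, yet is FOCs rationalizable with $\beta=1/3$, $\da=1$, $\la^0=1$, $\mu^0_2=3/8$, provided only that $u'(x^0)=12$. I then fix, once and for all, a monotone increasing, strictly concave, $C^2$ utility $u$ with $u'(x^0)=12$ (for instance $u'(z)=12\,e^{-b(z-x^0)}$ with small $b>0$), together with $\beta=1/3$ and $\da=1$. For $n$ distinct budgets at interior-favoring prices, let $x^1,\dots,x^n$ be equilibrium outcomes of the game induced by $(u,\beta,\da)$: Agent~2's problem has a unique solution by strict concavity, and Agent~1's reduced problem attains a maximum that, for suitable prices and budgets, is interior. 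Each such interior equilibrium satisfies the model's Euler equations and is therefore FOCs rationalizable by the same $(u,\beta,\da)$, with multipliers read off from $u$; varying the budget over an interval yields, by continuity, as many distinct observations as desired.

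It then remains to assemble $\D=\{(x^0,p^0),(x^1,p^1),\dots,(x^n,p^n)\}$ and verify the two conclusions. FOCs rationalizability of $\D$ holds with the single triple $(u,\beta,\da)$ chosen above: observation $0$ satisfies its Euler equations because $u'(x^0)=12$, while observations $1,\dots,n$ satisfy theirs because they were generated as equilibria of $(u,\beta,\da)$. Non-rationalizability of $\D$ is immediate from the propagation principle together with the non-rationalizability of $(x^0,p^0)$ already established. Since $n$ is arbitrary, this delivers FOCs-rationalizable-but-not-rationalizable datasets of arbitrarily large size.

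I expect the main obstacle to lie in the calibration of the single padding utility $u$: it must simultaneously pin down the derivative $u'(x^0)=12$ demanded by observation $0$ and generate genuinely interior equilibria at distinct budgets. Care is needed here precisely because the reduced Agent-1 objective need not be concave (that non-concavity is the phenomenon driving Theorem~\ref{thm:example}); the remedy is to take $u$ with mild curvature and prices close to neutral so that the padding equilibria are interior global maxima, and to invoke continuity in the budget to harvest arbitrarily many distinct observations. If one wishes to bypass the second-order analysis altogether, the padding observations can instead be built directly as interior solutions of the two Euler equations~\eqref{eq:foc2} and~\eqref{eq:foc1} for the fixed $u$ (pick $x_2$, set $x_3=g(x_2)$ and $x_1=f(x_2)$, and read off prices and budget), which guarantees FOCs rationalizability without ever checking that they are genuine equilibria.
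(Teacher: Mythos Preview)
Your approach is correct and proves the stated theorem, but it is genuinely different from---and considerably simpler than---the paper's construction.

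You exploit the monotonicity of rationalizability in the dataset: the single observation $(x^0,p^0)$ from Theorem~\ref{thm:example} is already not rationalizable by any admissible $(u,\beta,\da)$, so any dataset containing it inherits non-rationalizability. You then pad with observations engineered to satisfy the Euler equations for one fixed $(u,1/3,1)$ with $u'(x^0)=12$, and your final remark---that the padding observations need only solve~\eqref{eq:foc2} and the Euler identity, not be genuine equilibria---is exactly right and sidesteps the second-order issue cleanly. The construction is sound: given $(u,1/3,1)$, one may pick $x_2,x_3$ freely, read off $p_2=3u'(x_2)/u'(x_3)$, and then choose $p_1$ in an open interval so that the implied $\mu_2$ lands in $(0,1)$; varying these yields as many distinct observations as desired.

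The paper takes a substantially different route. It designs the first $K$ observations so that \emph{any} FOCs rationalization $(u,\beta,\da)$ is forced to have $\beta=\da=0.8$ and $u'$ agreeing with $1-x^2$ on a grid of mesh $1/K$; thus $u$ must be uniformly close to $\hat u(x)=x-x^3/3$ on a fixed interval. A final observation is then one for which $\hat u$ satisfies the first-order conditions but violates the second-order condition by a fixed margin $\Delta>0$; an approximation argument shows any nearby $u$ inherits the failure. The point of this more laborious construction is to address the motivation in Section~\ref{sec:discussion}: perhaps many observations give FOCs rationalizability enough bite to close the gap. The paper's example shows the gap persists even when the observations pin down $(u,\beta,\da)$ almost uniquely. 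In your construction, by contrast, the padding observations impose no constraint beyond what observation~$0$ already did---they are compatible with the same $(u,1/3,1)$ by design---so while your argument proves the theorem as stated, it does not speak to whether \emph{informative} additional observations would close the gap.
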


\begin{proof} Fix two numbers, $\ul\ta=3/100$ and $\bar\ta=90/100$. Consider a dataset with $K+2$ observations $(x^k,p^k)$, $1\leq k\leq K+2$. These are as follows:
\begin{enumerate}
\item $x^k_2=x^k_3=0.1$ and $p^k_2=\frac{1}{(0.8)^2}$ for $1\le k\le K$
\item $x^k_1= \bar\ta \frac{k}{K}$ and $p^k_1 = \frac{1+.8^{2}}{.99 (1.8).8^4} ( 1 - (\bar\ta \frac{k}{K})^2 )$ for $1\le k\le K$
\item $x^{K+1}_1=x^{K+1}_2=x^{K+1}_3=0.2$, $p^{K+1}_2=\frac{1}{(0.8)^2}$, and $p^{K+1}_1=\frac{5125}{2304}$.
\item $x^{K+2}_1=0.0757$, $x^{K+2}_2=0.0367$, $x^{K+2}_3=0.4688$, $p^{K+2}_2=2$, and $p^{K+2}_1=3.0884$.
\end{enumerate}

As we will see below the data is strong FOCs rationalizable with $(\hat{u},\beta,\delta)$ where $\hat{u}(x)=x-\frac{x^3}{3}$ when $x\in [\underline{\theta}, \overline{\theta}]$ and $\beta=\delta=0.8$.

Suppose now that $(u,\beta,\delta)$ equilibrium rationalizes the data. By a positive affine transformation, we may normalize $u$ so that $u(\ul\ta)=\ul\ta-\ul\ta^3/3$ and  $u'(0.1)=.99$. By Equation~\eqref{eq:foc2}, then, $x^k_2=x^k_3=0.1$ and $p^k_2=\frac{1}{(0.8)^2}$ for any $k\le K$ imply that $1=p^k_2\beta\da$, so $\beta\delta = .8^2$.

Consider now observation $K+1$, where we also have $x^{K+1}_2=x^{K+1}_3$ and $p^{K+1}_2=\frac{1}{(0.8)^2}$. Then Equation~\eqref{eq:foc2} implies that $A=(\beta \delta p^{K+1}_2)^{-1} =1$ for this observation, and therefore $g(x_2)=x_2$ for Agent 2's strategy in the game defined by observation $K+1$. As a consequence  we have that $g'(x_2)=1$. Then by Equation \eqref{eq:foc1}, and using the numbers in observation $K+1$, we obtain that 
\[1=\frac{u'(0.2)}{u'(0.2)}=\frac{\delta\, \frac{5125}{2304}}{\frac{1}{.8^2}}\frac{1+\frac{\beta}{.8^2}}{1+\frac{1}{.8^2}}=\frac{5125}{2304}\frac{\delta+1}{0.8^{-2}+0.8^{-4}}=\frac{\delta+1}{1.8}.\]
Hence, $\beta=\da=0.8$

We turn next to the observations $k=1,\ldots,K$. Again the fact that consumption in periods 2 and 3 are the same, and that $p^k_2\beta\delta=1$ means that $g'(x_2)=1$ for such observations. Then Equation \eqref{eq:foc1} implies
\begin{align*}
u'(x^k_1) & = u'(0.1)\delta\frac{p^k_1}{p^k_2}\frac{1+\beta p^k_2}{1+p^k_2}=\frac{u'(0.1)\beta\delta (1+\delta)}{1+.8^{-2}} p^k_1 \\
& = \frac{u'(0.1)0.8^2 (1.8)}{1+.8^{-2}} [ \frac{1+.8^{2}}{.99 (1.8).8^4} ( 1 - (\bar\ta k/K)^2 )  ] \\
& =   1 - (x^k_1)^2 
\end{align*}

Let $\hat u(x) = x-x^3/3$. The calculations above mean that $u'(x) = \hat u'(x)$ for all $x\in \{z_1,\ldots,z_K\}$, with $0<z_1<\ldots < z_K= \bar \ta$ and $z_{k+1}-z_k<1/K$. Then for any $x\in [z_1,\bar \ta]$ we have that
\[\begin{split}
u'(x) - \hat u'(x) \leq u'(z_k) - \hat u'(z_{k+1}) = \hat u'(z_k) - \hat u'(z_{k+1}) = z_{k+1}^2-z_k^2  \\
= (z_{k+1}-z_k) (z_{k+1} + z_k)< \frac{2}{K},
\end{split}\] where we have chosen $k$ so that $z_k\leq x\leq z_{k+1}$, and used the concavity of $u'$ and $\hat u'$. Similarly, $\hat u'(x) - u'(x)< 2/K$. 

In consequence we have $\sup\{u'(x)-\hat u'(x):x\in [\ul\ta,\bar\ta] \}< 2/K$ and therefore $\abs{u(x)-\hat u(x)}< 2/K$ for all $x\in [\ul\ta,\bar\ta]$ as $u(\ul\ta)=\hat u(\ul\ta)$.

Now consider the last observation $K+2$. Observe that in, the budget set for observation $K+2$, all affordable bundles involve quantities that are smaller than $\bar \ta$.

Here $u$ and $\hat u$ both match the first order conditions \eqref{eq:foc1} and \eqref{eq:foc2}. For this observation we obtain $A^* = (p^{K+2}_2\beta \da)^{-1}=0.78125$. Let $\hat g$ and $\hat f$ denote the functions $f$ and $g$ corresponding to utility function $\hat u$, $A^*$, $p_2=p^{K+2}_2$, $p_1=p^{K+1}_1$ and $m=m^{K+1}$. We use $f$ and $g$ to denote these functions for utility $u$ and the parameters of the $K+2$ budget.

To check the second-order condition for $\hat u$, recall Equation \eqref{eq:SOCn}. Since \[\hat g'_k(x_2)=\frac{A^*x_2}{\sqrt{1-A^*+A^* (x_2)^2}}=0.0611\text{ and }\hat g''_k(x_2)=\frac{A^*(1-A^*)}{(1-A^*+A^* (x_2)^2)^\frac{3}{2}}=1.658,\]
we have
\begin{eqnarray*}\hat g''_k(x_2) \hat u'(x_2)\frac{\delta(1-\beta)}{\hat g'(x_2)+p^{K+2}_2}\!\!&\!\!=\!\!&\!\!0.128\\
\!\!&\!\!>\!\!&\!\!|\hat u''(x_1)(\hat f'(x_2))^2+\beta\delta \hat u''(x_2)+\beta \delta^2 \hat u''(x_3) (\hat g'(x_2))^2|=0.116.\end{eqnarray*}
So the second-order condition, Equation \eqref{eq:SOCn}, is violated.


This means that there is $\Delta>0$ so that $\hat u(x^{K+2}_1)+\beta\da \hat u(x^{K+2}_2)+\beta\da^2 \hat u(x^{K+2}_3)+\Delta<M$, where $M$ is the optimal utility for Agent 1's maximization program. Let $(\hat x_1,\hat x_2,\hat x_3)$ achieve utility $M$. So $\hat x_3=\hat g(\hat x_2)$, $\hat x_1=\hat f(\hat x_2)$, and $M=\hat u(\hat x_1)+\beta\da \hat u(\hat x_2)+\beta\da^2 \hat u(\hat x_3)$.

First, choose $\ep>0$ and $K'>\frac{1}{\ul\ta}$ so that
\begin{equation}\label{eq:bound1}
\ep + 2/K' + 0.8^2 (2/K') + 0.8^3 (\ep + 2/K')< \Delta/3
\end{equation}

Next, choose $K\geq K'$ so that
\begin{equation}\label{eq:bound2}
(A^*+1)\frac{2}{K}<\ep^2.
\end{equation}

Consider the utility under utility function  $u$ when Agent 1 chooses consumption $\hat x_2$, the optimal choice under utility function $\hat u$. Let $x_3=g(\hat x_2)$ and $x_1=f(\hat x_2)$.

Then we have $\hat u'(\hat x_3)= A^* \hat u'(\hat x_2)$ and $u'(x_3)= A^* u'(\hat x_2)$. So\[
A^*(\hat u'(\hat x_2) - u'(\hat x_2)) + u'(x_3) - \hat u'(x_3) = \hat u'(\hat x_3) - \hat u'(x_3),\] which implies that
\[\begin{split}
A^*(2/K) + 2/K\geq 
A^*\abs{\hat u'(\hat x_2) - u'(\hat x_2))} + \abs{u'(x_3) - \hat u'(x_3)}
\geq \abs{\hat u'(\hat x_3) - \hat u'(x_3)} \\
 = \abs{\hat x^2_3 - x^2_3} = \abs{\hat x_3 - x_3}(\hat x_3 + x_3).
\end{split}
\]
Now we claim that $\abs{\hat x_3 - x_3}<\ep$. There are two cases. If $\hat x_3 + x_3<\ep$ then we are done because the difference between two positive numbers is smaller than their sum. If, instead, $\hat x_3 + x_3\geq \ep$ then~\eqref{eq:bound2} and our choice of $K$ implies that $\ep > \frac{1}{\ep}(A^*(2/K) + 2/K)$ so we also conclude that $\abs{\hat x_3 - x_3}<\ep$.

As a consequence, we obtain that 
\[
\abs{\hat u(\hat x_3) - u(g(\hat x_2))}
\leq
\abs{\hat u(\hat x_3) - \hat u(x_3)} + \abs{\hat u(x_3) - u(x_3)}
\leq \abs{\hat x_3 - x_3} + 2/K<\ep +2/K,\] as $\hat u'$ is monotone decreasing and $\hat u'(0)=1$. 

Finally, 
\[
\abs{\hat f(\hat x_2) - f(\hat x_2)} = \frac{1}{p^{K+2}_1}\abs{\hat g(\hat x_2) - g(\hat x_2)}<\abs{\hat x_3-x_3}<\ep,\] as $p^{K+2}_1=3.0884$. So, by the same reasoning as above, we have that $\abs{\hat u(\hat x_1) - u(f(\hat x_2))}<\ep+2/K$.

Putting everything together we conclude, by~\eqref{eq:bound1}, that
\[
[u(f(\hat x_2)) + \beta\da u(\hat x_2) + \beta\da^2 u(g(\hat x_2))]
-[u(x^{K+2}_1)+\beta\da u(x^{K+2}_2)+\beta\da^2 u(x^{K+2}_3)]>\Delta/3
\]
since the above is equal to the sum of $C1$, $C2$, and $C3$, where 
\begin{align*}
C1 & = u(f(\hat x_2)) - \hat u(\hat x_1)
+\beta\da [u(\hat x_2) - \hat u(\hat x_2)]
+\beta\da^2 [u(g(\hat x_2)) - \hat u(\hat x_3)] >-\Delta/3 \\
C2 & =
\hat u(\hat x_1) - \hat u (x^{K+2}_1)
+\beta\da [\hat u(\hat x_2) - \hat u (x^{K+2}_2)]
+\beta\da^2 [\hat u(\hat x_2) - \hat u (x^{K+1}_2)]> \Delta \\
C3 & = \hat u(x^{K+2}_1) - u(x^{K+2}_1) \\
& +\beta\da [\hat u(x^{K+2}_2) - u(x^{K+2}_2)]
+\beta\da^2 [\hat u(x^{K+2}_3) - u(x^{K+2}_3)]  >-\Delta/3 \\
\end{align*}
where the inequalities for $C1$ and $C2$ follow from \eqref{eq:bound1} and the bounds we have established above.

Finally, we note that the optimal choices $(\hat x_1,\hat x_2,\hat x_3)$ for $\hat u$, achieving utility $M$, can be calculated to be $\hat x_1=0.06$, $\hat x_2=0.06$, and $\hat x_3=0.4707$, all of which are in $[\ul\ta,\bar\ta]$. We conclude then that $x^{K+2}_2$ is not an optimal choice for Agent 2 when the utility is $(u,\beta,\da)$.

\end{proof}

\section{Necessary Conditions for Sophisticated Rationalization}\label{sec:appB}

\begin{namedaxiom}[Condition 3]
There is no sequence $k_1, \ldots, k_L$ of $K$ such that for each $l\le L$, $\mathbf{x}^{k_l}\neq \mathbf{x}^{k_{l+1}}$ and 
\begin{equation}\label{Eq6}
\mathbf{x}^{k_l}\in B^{k_{l+1}}(x^{k_{l}}_1)\subseteq \bigcup_{t:\mathbf{x}^t\in  B^{k_l}(x^{k_l}_1)} B^t(x^t_1).\end{equation}
\end{namedaxiom}

\medskip
\noindent\textbf{Remark 2.} Condition 3 implies Condition 2 as Equation (\ref{Eq7}) implies Equation (\ref{Eq6}). When $x^k_1\neq x^s_1$ for any $k, s\in K$, Conditions 2 and 3 are equivalent. \medskip

\medskip
\noindent\textbf{Revealed Preference 3.} $\mathbf{x}^k R_1 \mathbf{x}^s$ if $\mathbf{x}^k \neq \mathbf{x}^s$ and
\[\mathbf{x}^s\in B^k(x^s_1)\subseteq \bigcup_{t:\mathbf{x}^t\in  B^s(x^s_1)} B^t(x^t_1).\]

Under Condition 3, $R_1$ is acyclic. We write $\mathbf{x}^{k} \,\text{tran}(R_1)\, \mathbf{x}^{s}$ if there is $k_1, \ldots, k_L$ such that $x^{k_1}=x^k$, $x^{k_L}=x^s$, and $x^{k_l} R_1 x^{k_{l+1}}$ for each $l<L$.\medskip

\begin{namedaxiom}[Condition 4] There is no sequence $k_1, \ldots, k_L, s_1, \ldots, s_L$ of $K$ such that for each $l\le L$,  
\begin{equation}\label{eq20}
\mathbf{x}^{s_l} \text{tran}(R_1) \mathbf{x}^{k_l}\text{ and }\mathbf{x}^{s_l}\in B^{k_{l}}(x^{s_{l}}_1)\text{ and }x^{s_l}_1\neq x^{k_l}_1, 
\end{equation}
and
\begin{equation}\label{eq21}
\bigcup_{l} \left[B^{k_{l}}(x^{s_{l}}_1)\setminus\Big[\bigcup_{t:\mathbf{x}^t\in  B^{s_l}(x^{s_l}_1)} B^t(x^t_1)\Big]\right]\subseteq \bigcup_{s\in S} B^{s}(x^{s}_1),
\end{equation}
for any $s\in S$,
\begin{equation}\label{eq22}
\mathbf{x}^{s_l} R_2 \mathbf{x}^s\text{ or }\mathbf{x}^{s_l}= \mathbf{x}^s\text{ for some $l$.}
\end{equation}
\end{namedaxiom}

\begin{proposition}\label{prop6} If $\D$ is sophisticated rationalizable, then it satisfies N-SARP and Conditions 3-4.  \end{proposition}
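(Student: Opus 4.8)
The plan is to fix a pair of weak orders $\succsim_1,\succsim_2$ sophisticatedly rationalizing $\D$, and to extract from the defining identity~\eqref{Eq5} two facts used throughout. \textbf{(a)} For every $k$, $\mathbf{x}^k\in\max\big(B^k(x^k_1),\succsim_2\big)$: indeed $\mathbf{x}^k$ must lie in the union in~\eqref{Eq5}, and the only set $\max\big(B^k(x_1),\succsim_2\big)$ in that union that can contain a bundle with first coordinate $x^k_1$ is the one with $x_1=x^k_1$. \textbf{(b)} $\mathbf{x}^k$ is the unique $\succsim_1$-maximal element of $U^k:=\bigcup_{x_1\in B^k_1}\max\big(B^k(x_1),\succsim_2\big)$, so any $\mathbf{y}\in U^k$ with $\mathbf{y}\neq\mathbf{x}^k$ satisfies $\mathbf{x}^k\succ_1\mathbf{y}$. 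From (a) it is immediate that $\mathbf{x}^k\mathrel{R_2}\mathbf{x}^s$ (the revealed preference $R_2$ of~\eqref{eq22}, i.e. Revealed Preference~2) implies $\mathbf{x}^k\succsim_2\mathbf{x}^s$.

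Next I would handle N-SARP. Suppose $k_1,\dots,k_L$ cyclically violates it. Each membership $\mathbf{x}^{k_l}\in B^{k_{l+1}}(x^{k_{l+1}}_1)$ forces $x^{k_l}_1=x^{k_{l+1}}_1$, so all $\mathbf{x}^{k_l}$ share a first coordinate $a$, and by (a) we get $\mathbf{x}^{k_{l+1}}\succsim_2\mathbf{x}^{k_l}$ for every $l$; going around the cycle, the $\mathbf{x}^{k_l}$ are pairwise $\succsim_2$-indifferent. Hence each $\mathbf{x}^{k_l}$ is itself in $\max\big(B^{k_{l+1}}(a),\succsim_2\big)\subseteq U^{k_{l+1}}$, and since $\mathbf{x}^{k_l}\neq\mathbf{x}^{k_{l+1}}$, (b) gives $\mathbf{x}^{k_{l+1}}\succ_1\mathbf{x}^{k_l}$; a $\succ_1$-cycle contradicts asymmetry.

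For Condition~3 I would show that the relation $R_1$ of Revealed Preference~3 satisfies $\mathbf{x}^k\mathrel{R_1}\mathbf{x}^s\Rightarrow\mathbf{x}^k\succ_1\mathbf{x}^s$, which is exactly the asserted acyclicity. Assume $\mathbf{x}^s\in B^k(x^s_1)\subseteq\bigcup_{t:\mathbf{x}^t\in B^s(x^s_1)}B^t(x^t_1)$ and $\mathbf{x}^k\neq\mathbf{x}^s$. For $\mathbf{y}\in B^k(x^s_1)$ choose $t$ with $\mathbf{y}\in B^t(x^t_1)$ and $\mathbf{x}^t\in B^s(x^s_1)$; then $\mathbf{x}^t\succsim_2\mathbf{y}$ by (a), while $\mathbf{x}^t\in B^s(x^s_1)$ gives $x^t_1=x^s_1$ and $\mathbf{x}^s\succsim_2\mathbf{x}^t$ by (a), so $\mathbf{x}^s\succsim_2\mathbf{y}$. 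Thus $\mathbf{x}^s\in\max\big(B^k(x^s_1),\succsim_2\big)\subseteq U^k$ (here $x^s_1\in B^k_1$), and (b) yields $\mathbf{x}^k\succ_1\mathbf{x}^s$. Passing to transitive closures, $\mathbf{x}^k\,\mathrm{tran}(R_1)\,\mathbf{x}^s\Rightarrow\mathbf{x}^k\succ_1\mathbf{x}^s$, which I reuse below.

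I expect Condition~4 to be the main obstacle. Suppose a violating configuration $k_1,\dots,k_L,s_1,\dots,s_L$ with set $S$ exists. Fix $l$. By~\eqref{eq20}, $\mathbf{x}^{s_l}\succ_1\mathbf{x}^{k_l}$, and $\mathbf{x}^{s_l}\in B^{k_l}(x^{s_l}_1)$ with $x^{s_l}_1\neq x^{k_l}_1$. Since $x^{s_l}_1\in B^{k_l}_1$, fact (b) for $k_l$ makes $\mathbf{x}^{k_l}$ strictly $\succ_1$-above every element of $\max\big(B^{k_l}(x^{s_l}_1),\succsim_2\big)$ (none of which is $\mathbf{x}^{k_l}$, by first coordinates), hence so is $\mathbf{x}^{s_l}$; in particular $\mathbf{x}^{s_l}\notin\max\big(B^{k_l}(x^{s_l}_1),\succsim_2\big)$, so there is $\mathbf{w}_l\in B^{k_l}(x^{s_l}_1)$ with $\mathbf{w}_l\succ_2\mathbf{x}^{s_l}$. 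The transitivity computation from Condition~3 shows every element of $B^{k_l}(x^{s_l}_1)\cap\bigcup_{t:\mathbf{x}^t\in B^{s_l}(x^{s_l}_1)}B^t(x^t_1)$ is $\succsim_2$-below $\mathbf{x}^{s_l}$, so $\mathbf{w}_l$ lies in the residual set $B^{k_l}(x^{s_l}_1)\setminus\bigcup_{t:\mathbf{x}^t\in B^{s_l}(x^{s_l}_1)}B^t(x^t_1)$. By~\eqref{eq21} there is $s\in S$ with $\mathbf{w}_l\in B^s(x^s_1)$, so (a) gives $\mathbf{x}^s\succsim_2\mathbf{w}_l\succ_2\mathbf{x}^{s_l}$; and by~\eqref{eq22} for this $s$ there is an index $l'$ with $\mathbf{x}^{s_{l'}}\mathrel{R_2}\mathbf{x}^s$ or $\mathbf{x}^{s_{l'}}=\mathbf{x}^s$, whence $\mathbf{x}^{s_{l'}}\succsim_2\mathbf{x}^s$ and therefore $\mathbf{x}^{s_{l'}}\succ_2\mathbf{x}^{s_l}$. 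This yields a map $l\mapsto l'$ on the finite set $\{1,\dots,L\}$ with $\mathbf{x}^{s_{l'}}\succ_2\mathbf{x}^{s_l}$ for all $l$; iterating it produces a repeated index, hence a $\succ_2$-cycle among the $\mathbf{x}^{s_l}$, contradicting the asymmetry of $\succ_2$. The two points I would verify in full are (i) that the witness $\mathbf{w}_l$ indeed escapes $\bigcup_{t:\mathbf{x}^t\in B^{s_l}(x^{s_l}_1)}B^t(x^t_1)$, and (ii) the weak-versus-strict bookkeeping, since only the step $\mathbf{w}_l\succ_2\mathbf{x}^{s_l}$ injects strictness into $\succsim_2$.
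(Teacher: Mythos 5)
Your proposal is correct and follows essentially the same route as the paper: facts (a)--(b) are exactly the consequences of strong sophisticated rationalization the paper extracts, N-SARP and Condition 3 are obtained by showing that $R_2$ and $R_1$ are reflected in $\succsim_2$ and $\succ_1$, and Condition 4 is refuted by locating a witness $\mathbf{w}_l\succ_2\mathbf{x}^{s_l}$ in the residual set $B^{k_l}(x^{s_l}_1)\setminus\bigcup_{t:\mathbf{x}^t\in B^{s_l}(x^{s_l}_1)}B^t(x^t_1)$ and then invoking \eqref{eq21}--\eqref{eq22}. The only cosmetic differences are your direct maximality argument for Condition 3 (the paper argues by contradiction through $R_2$) and your iterated-map finish for Condition 4 (the paper instead takes a $\succsim_2$-maximal element of $\{\mathbf{x}^s\}_{s\in S}$), both of which are equivalent.
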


\subsection{\textbf{Proof of Proposition \ref{prop6}}} \textbf{N-SARP.} To prove the necessity of N-SARP, it is now enough to show the acyclicity of $R_2$. Take any $k, s$ such that $\mathbf{x}^k R_2 \mathbf{x}^s$; i.e., $\mathbf{x}^k\neq \mathbf{x}^s$ and $\mathbf{x}^s\in B^k(x^k_1)$. Since 
\[\max\Big(\bigcup_{x_1\in B^k_{1}}\max\big(B^k(x_1), \succsim_{2}\big), \succsim_1\Big)=\mathbf{x}^k,\]
we have
\[\max\Big(\max\big(B^k(x^k_1), \succsim_{2}\big), \succsim_1\Big)=\mathbf{x}^k\text{ and }\mathbf{x}^k\in \max\big(B^k(x^k_1), \succsim_{2}\big).\]
First, $\mathbf{x}^k\in \max\big(B^k(x^k_1), \succsim_{2}\big)$ and $\mathbf{x}^s\in B^k(x^k_1)$ imply $\mathbf{x}^k\succsim_{2}\mathbf{x}^s$. Moreover, if $\mathbf{x}^k\sim_{2}\mathbf{x}^s$, $\max\Big(\max\big(B^k(x^k_1), \succsim_{2}\big), \succsim_1\Big)=\mathbf{x}^k$ implies $\mathbf{x}^k\succ_{1}\mathbf{x}^s$. In other words, $\mathbf{x}^k R_2 \mathbf{x}^s$ implies that either $\mathbf{x}^k\succ_{2}\mathbf{x}^s$ or $\mathbf{x}^k\sim_{2}\mathbf{x}^s$ and $\mathbf{x}^k\succ_{1}\mathbf{x}^s$. Therefore, $R_2$ is acyclic.

\smallskip
\noindent\textbf{Condition 3.} To prove the necessity of Condition 3, it is enough to prove the acyclicity of $R_1$. Take any $k, s$ such that $\mathbf{x}^k R_1 \mathbf{x}^s$. Since 
\[\max\Big(\bigcup_{x_1\in B^s_{1}}\max\big(B^s(x_1), \succsim_{2}\big), \succsim_1\Big)=\mathbf{x}^s,\]
we have
\[\max\Big(\max\big(B^s(x^s_1), \succsim_{2}\big), \succsim_1\Big)=\mathbf{x}^s;\]
which implies
\[\mathbf{x}^s\in \max\big(B^s(x^s_1), \succsim_{2}\big).\]
Moreover, since 
\[\max\Big(\big\{\max\big(B^k(x_1), \succsim_{2}\big)\big\}_{x_1\in B^k_{1}}, \succsim_1\Big)=\mathbf{x}^k\text{ and }x^s_1\in B^k_1,\]
we have
\[\mathbf{x}^k\succ_1 \max\Big(\max\big(B^k(x^s_1), \succsim_{2}\big), \succsim_1\Big)\text{ or }\max\Big(\max\big(B^k(x^s_1), \succsim_{2}\big), \succsim_1\Big)=\mathbf{x}^k.\]
If $\mathbf{x}^s\in \max\big(B^k(x^s_1), \succsim_{2}\big)$, then we have $\mathbf{x}^k\succ_1\mathbf{x}^s$. Suppose now  $\mathbf{x}^s\not\in \max\big(B^k(x^s_1), \succsim_{2}\big)$. Since 
\[\mathbf{x}^s\in \max\big(B^s(x^s_1), \succsim_{2}\big),\]
we have 
\[\max\big(B^k(x^s_1)\setminus B^s(x^s_1), \succsim_{2}\big)\succ_{2} \mathbf{x}^s.\]

However, \[B^k(x^s_1)\subseteq \bigcup_{t:\mathbf{x}^t\in  B^s(x^s_1)} B^t(x^t_1)\]
and the definition of $R_2$
imply that 
\[B^k(x^s_1)\setminus B^s(x^s_1)\subseteq \bigcup_{\mathbf{x}^s R_2 \mathbf{x}^t} B^t(x^t_1).\]
Therefore, 
\[\max(\bigcup_{\mathbf{x}^s R_2 \mathbf{x}^t} B^t(x^t_1), \succsim_{2})\succsim_{2}\max\big(B^k(x^s_1)\setminus B^s(x^s_1), \succsim_{2}\big)\succ_{2} \mathbf{x}^s.\]

However, by the proof of the acyclicity of $R_2$, $\mathbf{x}^s R_2 \mathbf{x}^t$ implies that $\mathbf{x}^s \succsim_{2} \mathbf{x}^t$. Moreover, since 
\[\max\Big(\max\big(B^t(x^t_1), \succsim_{2}\big), \succsim_1\Big)=\mathbf{x}^t,\]
we have
\[\mathbf{x}^t\in \max\big(B^t(x^t_1), \succsim_{2}\big).\]
Therefore, we have 
\[\mathbf{x}^s\succsim_{2} \max(\bigcup_{\mathbf{x}^s R_2 \mathbf{x}^t} B^t(x^t_1), \succsim_{2}),\] 
a contradiction.

\smallskip
\noindent\textbf{Condition 4.} To prove the necessity of Condition 4, by way of contradiction, suppose there is a sequence $k_1, \ldots, k_L, s_1, \ldots, s_L$ of $K$ such that for each $l\le L$,  
\begin{equation}\label{eq23}
\mathbf{x}^{s_l} \text{tran}(R_1) \mathbf{x}^{k_l}\text{ and }\mathbf{x}^{s_l}\in B^{k_{l}}(x^{s_{l}}_1)\text{ and }x^{s_l}_1\neq x^{k_l}_1, 
\end{equation}
and
\begin{equation}\label{eq24}
\bigcup_{l} \left[B^{k_{l}}(x^{s_{l}}_1)\setminus\Big[\bigcup_{t:\mathbf{x}^t\in  B^{s_l}(x^{s_l}_1)} B^t(x^t_1)\Big]\right]\subseteq \bigcup_{s\in S} B^{s}(x^{s}_1),
\end{equation}
for any $s\in S$,
\begin{equation}\label{eq25}
\mathbf{x}^{s_l} R_2 \mathbf{x}^s\text{ or }\mathbf{x}^{s_l}= \mathbf{x}^s\text{ for some $l$.}
\end{equation}

First, $\mathbf{x}^{s_l} \text{tran}(R_1) \mathbf{x}^{k_l}$ implies that $\mathbf{x}^{s_l} \succ_1 \mathbf{x}^{k_l}$. Then, since $\mathbf{x}^{s_l} \succ_1 \mathbf{x}^{k_l}$ and 
\[\max\Big(\bigcup_{x_1\in B^{k_l}_{1}}\max\big(B^{k_l}(x_1), \succsim_{2}\big), \succsim_1\Big)=\mathbf{x}^{k_l},\]
we need to have $\mathbf{x}^{s_l}\not\in \max\big(B^{k_l}(x^{s_l}_1), \succsim_{2}\big)$. Since, $\mathbf{x}^{s_l}\in B^{k_{l}}(x^{s_{l}}_1)\text{ and }x^{s_l}_1\neq x^{k_l}_1$ imply that there is some $\mathbf{x}\in B^{k_{l}}(x^{s_{l}}_1)\setminus\Big[\bigcup_{t:\mathbf{x}^t\in  B^{s_l}(x^{s_l}_1)} B^t(x^t_1)\Big]$ such that $\mathbf{x}\succ_2 \mathbf{x}^{s_l}$. By Equation (\ref{eq24}), there is $s$ such that $\mathbf{x}\in B^s(x^s_1)$. Hence, $\mathbf{x}^s\succsim_2 \mathbf{x}$. In other words, $\mathbf{x}^s\succ_2 \mathbf{x}^{s_l}$. Let $\mathbf{x}^{s^*}\in \max(\{\mathbf{x}^s\}_{s\in S}, \succsim_2)$. Then $\mathbf{x}^{s^*}\succ_2 \mathbf{x}^{s_l}$ for each $l$. In other words, there is no $l$ that satisfies Equation (\ref{eq25}).

\end{document}